\newtheorem{theorem}{Theorem}
\newtheorem{proposition}[theorem]{Proposition}
\newtheorem{lemma}[theorem]{Lemma}
\newtheorem{observation}[theorem]{Observation}
\newtheorem{claim}{Claim}
\newcommand{\smallqed}{\hfill{\tiny $\left(\Box\right)$}}
\newcommand{\claimproof}{\noindent\emph{Proof of claim.} }
\newcommand{\DP}{DP}
\newcommand{\MD}{MD}
\newcommand{\DS}{\gamma}
\newcommand{\decisionpb}[4]{
        \begin{minipage}{#4\textwidth}
                #1\\
                \emph{Instance:} #2\\ 
                \emph{Question:} #3
        \end{minipage}\vspace{\baselineskip}
}
\newcommand{\optpb}[4]{
        \noindent\begin{minipage}{#4\textwidth}
                #1\\
                \emph{Instance:} #2\\ 
                \emph{Task:} #3
        \end{minipage}\vspace{\baselineskip}
}
\newcommand{\DPdec}{\textsc{Detection Pair}\xspace}
\newcommand{\DPopt}{\textsc{Opt-Detection Pair}\xspace}
\newcommand{\DSdec}{\textsc{Dominating Set}\xspace}
\newcommand{\DSopt}{\textsc{Opt-Dominating Set}\xspace}
\newcommand{\MDdec}{\textsc{Metric Dimension}\xspace}
\newcommand{\MDopt}{\textsc{Opt-Metric Dimension}\xspace}
\newcommand{\SCdec}{\textsc{Set Cover}\xspace}
\newcommand{\SCopt}{\textsc{Opt-Set Cover}\xspace}
\begin{document}

\title{Parameterized and approximation complexity of the detection pair problem in graphs}
\author{Florent Foucaud\footnote{\noindent LIMOS - CNRS UMR 6158 - Universit\'e Blaise Pascal, Clermont-Ferrand (France). E-mail: florent.foucaud@gmail.com}
\and Ralf Klasing\footnote{\noindent LaBRI - CNRS UMR 5800 - Université de Bordeaux, F-33400 Talence (France). E-mail: ralf.klasing@labri.fr}
}

\maketitle

\begin{abstract}
We study the complexity of the problem \DPdec. A detection pair of a graph $G$ is a pair $(W,L)$ of sets of detectors with $W\subseteq V(G)$, the watchers, and $L\subseteq V(G)$, the listeners, such that for every pair $u,v$  of vertices that are not dominated by a watcher of $W$, there is a listener of $L$ whose distances to $u$ and to $v$ are different. The goal is to minimize $|W|+|L|$. This problem generalizes the two classic problems \DSdec and \MDdec, that correspond to the restrictions $L=\emptyset$ and $W=\emptyset$, respectively. \DPdec was recently introduced by Finbow, Hartnell and Young [A. S. Finbow, B. L. Hartnell and J. R. Young. The complexity of monitoring  a network with both watchers and listeners. \emph{Networks}, accepted], who proved it to be NP-complete on trees, a surprising result given that both \DSdec and \MDdec are known to be linear-time solvable on trees. It follows from an existing reduction by Hartung and Nichterlein for \MDdec that even on bipartite subcubic graphs of arbitrarily large girth, \DPdec is NP-hard to approximate within a sub-logarithmic factor and W[2]-hard (when parameterized by solution size). We show, using a reduction to \SCdec, that \DPdec is approximable within a factor logarithmic in the number of vertices of the input graph. Our two main results are a linear-time $2$-approximation algorithm and an FPT algorithm for \DPdec on trees.

\vspace{0.5cm}
\textbf{Keywords:} Graph theory, Detection pair, Metric dimension, Dominating set, Approximation algorithm, Parameterized complexity
\end{abstract}

\section{Introduction}

In order to monitor faults in a network, one can place detectors on its nodes. One possibility is to use ``local'' devices that are able to detect the location of a fault within distance one (we call them \emph{watchers}). Another kind of detectors, that are more far-reaching, are able to determine the exact distance of a fault to the device, but not its precise location (we call such detectors \emph{listeners}). If we wish to monitor a network by using only watchers, we have the classic problem \DSdec~\cite[GT2]{GJ79} (see the books~\cite{HHS98,dom:advancedtopics} for a survey of results on this problem). If, on the other hand, we want to use only listeners, we have the problem \MDdec~\cite[GT61]{GJ79} (see for example the papers~\cite{BC,CEJO00,HM76,HN13,KRR96,S75} and references therein). However, it can be useful to use both kinds of detectors: for example, one watcher is enough to monitor a complete graph of order~$n$, while we would need $n-1$ listeners to do so. On the other hand, one listener would suffice to monitor a path of order~$n$, but for the same task we would need $\lceil n/3\rceil$ watchers. Therefore, Finbow, Hartnell and Young~\cite{FHY15} recently proposed the concept of a \emph{detection pair} in a graph $G$, that is, a pair $(W,L)$ where $W\subseteq V(G)$ is a set of watchers and $L\subseteq V(G)$ is a set of listeners that, together, monitor the graph $G$.

More formally, we say that a vertex $w$ \emph{dominates} all the vertices in the closed neighbourhood $N[w]$ of $w$ (that is, $N[w]$ is the set of neighbours of $w$ together with $w$ itself). Moreover, a vertex $l$ \emph{separates} two vertices $u$ and $v$ if the distance $d(u,l)$ is different from the distance $d(v,l)$. Given a pair $(W,L)$ of sets with $W\subseteq V(G)$ (the \emph{watchers}) and $L\subseteq V(G)$ (the \emph{listeners}), we say that a vertex $u$ is \emph{distinguished} by $(W,L)$ either if $u$ is dominated by a watcher of $W$, or if for every other vertex $v$, either $v$ is dominated by a watcher of $W$, or $u$ and $v$ are separated by a listener of $L$. Then, $(W,L)$ is a \emph{detection pair} of $G$ if every vertex of $G$ is distinguished by $(W,L)$. The size of $(W,L)$ is the sum $|W|+|L|$ and is denoted by $||(W,L)||$. Note that we may have $W\cap L\neq \emptyset$ if we choose to place both a listener and a watcher at the same position. We denote by $\DP(G)$ the minimum size of a detection pair of $G$. Any vertex of $(W,L)$ is called a \emph{detector}.

If we have a detection pair $(\emptyset,L)$ in a graph $G$, then $L$ is called a \emph{resolving set} of $G$, and the smallest size of a resolving set of $G$ is called its \emph{metric dimension}, denoted $\MD(G)$. On the other hand, if we have a detection pair $(W,\emptyset)$, then $W$ is a \emph{dominating set} of $G$, and the smallest size of a dominating set of $G$ is its \emph{domination number}, denoted $\DS(G)$. Clearly, we have $\DP(G)\leq\min\{\DS(G),\MD(G)\}$, and the inequality can be strict~\cite{FHY15}. It follows that for any graph $G$ without isolated vertices, we have $\DP(G)\leq \DS(G)\leq |V(G)|/2$~\cite{HHS98}.

The goal of this paper is to study the decision and optimization problems that are naturally associated to the notion of a detection pair (other computational problems that are mentioned in this paper are defined formally in Section~\ref{sec:prelim}).

\bigskip

\noindent\decisionpb{\DPdec}{A graph $G$, a positive integer $k$.}{Is there a detection pair $(W,L)$ of $G$ with $||(W,L)||\leq k$?}{1}

\noindent\optpb{\DPopt}{A graph $G$.}{Compute an optimal detection pair of $G$.}{1}

A \emph{$c$-approximation algorithm} for a given optimization problem $\Pi_O$ is an algorithm that returns a solution whose size is always at most $c$ times the optimum. We refer to the book~\cite{ACGKMP99} for more details.
For a decision problem $\Pi$ and for some parameter $p$ of the instance, an algorithm for $\Pi$ is said to be \emph{fixed parameter tractable} (FPT for short) if it runs in time $f(p)n^c$, where $f$ is a computable function, $n$ is the input size, and $c$ is a constant. In this paper we will always consider the solution size $k$ as the parameter. We refer to the books~\cite{DF13,N06} for more details.

\bigskip

Finbow, Hartnell and Young proved that \DPdec is NP-complete on trees~\cite{FHY15}, while \DPopt is linear-time solvable on trees containing no pair of leaves with a common neighbour. This hardness result is quite surprising since the related problems \DSopt and \MDopt, while being NP-hard in general~\cite[GT2,GT61]{GJ79}, can be solved in linear time on trees (see respectively~\cite{CGH75} and~\cite{HM76,S75}). We note that \DSdec is among the most classic and well-studied graph problems (see the books~\cite{HHS98,dom:advancedtopics}), and that \MDdec has enjoyed a lot of interest in the recent years: see the papers~\cite{MFCS,DPSV12,Eppstein,ELW12j,FHvMS15,foucaudWG,HN13,HSV12}.

\bigskip

In this paper, we continue the study of the complexity of \DPdec and \DPopt initiated in~\cite{FHY15}. In Section~\ref{sec:gen}, we describe a reduction to \SCopt that shows that \DPopt can be solved within a factor logarithmic in the size of the input graph. On the other hand, we observe that a reduction of Hartung and Nichterlein~\cite{seppThesis,HN13} for \MDopt can also be applied to \DPopt. This implies that \DPopt is NP-hard to approximate within a factor that is sublogarithmic in the input graph's order, and that \DPdec is W[2]-hard when parameterized by the solution size $k$. (These hardness results hold even for graphs that are bipartite, subcubic, and have arbitrarily large girth.) In Section~\ref{sec:2approx-trees}, we prove that \DPopt is $2$-approximable in linear time on trees. In Section~\ref{sec:fpt-trees}, we show that there is an algorithm for \DPdec running in FPT time $2^{O(k\log k)}n^2$ on trees.\footnote{In this paper, ``$\log$'' denotes the natural logarithm function.} We start the paper with some preliminary considerations in Section~\ref{sec:prelim}, and conclude in Section~\ref{sec:conclu}.

\section{Preliminaries}\label{sec:prelim}

We start with some preliminary considerations.

\subsection{Definitions of used computational problems}

We now formally define a few auxiliary computational problems that are used or mentioned in this paper.

\bigskip

\noindent\decisionpb{\DSdec \cite[GT2]{GJ79}}{A graph $G$, a positive integer $k$.}{Do we have $\DS(G)\leq k$, that is, is there a dominating set $D$ of $G$ with $|D|\leq k$?}{1}

\noindent\optpb{\DSopt}{A graph $G$.}{Compute an optimal dominating set of $G$.}{1}

\noindent\decisionpb{\MDdec \cite[GT61]{GJ79}}{A graph $G$, a positive integer $k$.}{Do we have $\MD(G)\leq k$, that is, is there a resolving set $R$ of $G$ with $|R|\leq k$?}{1}

\noindent\optpb{\MDopt}{A graph $G$.}{Compute an optimal resolving set of $G$.}{1}

\noindent\decisionpb{\SCdec \cite[SP5]{GJ79}}{A hypergraph $H=(X,S)$, a positive integer $k$.}{Is there a set cover $C$ of $H$ with $|C|\leq k$ (that is, $C\subseteq S$ and each vertex of $X$ belongs to some set of $C$)?}{1}

\noindent\optpb{\SCopt}{A hypergraph $H$.}{Compute an optimal set cover of $H$.}{1}

\subsection{Specific terminology}

In a graph $G$, a vertex of degree~$1$ is called a \emph{leaf}. A vertex is called a \emph{branching point} if its degree is at least~$3$. A branching point $v$ is \emph{special} if there is a path $L$ starting at $v$ and ending at a vertex of degree~$1$, and whose inner-vertices all have degree~$2$. Path $L-v$ is called a \emph{leg} of $v$, and we say that $L$ is \emph{attached} to $v$ (note that vertex $v$ does not belong to the leg). A special branching point with $t$ leaves as neighbours is called a \emph{$t$-stem}.\footnote{The previous terminology is from~\cite{FHY15} (branching points, stems) and~\cite{KRR96} (legs).}

Given a special branching point $x$ of a tree $T$, we define the subtree $L(x)$ of $T$ as the tree containing $x$ and all legs of $T$ attached to $x$.

\subsection{The classic algorithm for \MDopt on trees}

We will use the following results of Slater~\cite{S75} about \MDopt on trees, that are classic in the literature about metric dimension. See also~\cite{HM76,KRR96} for similar considerations.

\begin{proposition}[Slater \cite{S75}]\label{prop:MD-trees}
Let $T$ be a tree and $R$ be the set of vertices containing, for each special branching point $x$ of $T$ that has $\ell$ legs attached, the leaves of the $\ell-1$ longest of these legs. Then, $R$ is an optimal resolving set of $T$.
\end{proposition}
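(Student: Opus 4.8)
The plan is to establish Proposition~\ref{prop:MD-trees} in two halves: first that the prescribed set $R$ is indeed a resolving set, and second that no resolving set can be smaller. Throughout, the crucial structural fact I would isolate at the start is a ``local'' description of how listeners separate vertices on a tree: for a listener $l$ and two vertices $u,v$, we have $d(u,l)=d(v,l)$ essentially only when $u$ and $v$ sit symmetrically with respect to $l$. In a tree, the most dangerous pairs are those lying on two different legs attached to the same special branching point $x$, at equal distance from $x$; such a pair is unseparated by any listener placed outside $L(x)$, because every shortest path from the outside to either vertex passes through $x$ and then continues along its respective leg.

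For the lower bound, I would argue that each special branching point $x$ with $\ell$ legs forces at least $\ell-1$ listeners ``associated'' to it. Concretely, fix $x$ and consider the $\ell$ legs attached to it. I would show that any resolving set must contain a detector on all but at most one of these legs: if two legs were both free of detectors, pick the vertex at distance one from $x$ on each of them; no vertex outside these two legs separates them (both are at the same distance from $x$, hence from any external listener), and a listener on a third leg or elsewhere cannot help either. Summing the bound $\ell-1$ over all special branching points, and checking that the legs of distinct special branching points are disjoint, yields that any resolving set has size at least $\sum_x (\ell_x - 1)$, which is exactly $|R|$.

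For the upper bound, I would verify that $R$ resolves every pair $u,v$. The natural case analysis splits on the position of $u$ and $v$ relative to the ``spine'' of the tree (the subtree remaining after deleting all legs) and the legs. The key point is that within any single leg, or between a leg and the spine, ordinary distance arguments separate the pair using the one leg-endpoint listener we kept on that leg (or an adjacent one); the genuinely delicate pairs are two vertices on two distinct legs of the same $x$, and here the fact that $R$ keeps the endpoint of all but the \emph{longest} leg is what saves us, since for any two such legs at least one carries a listener, and that listener reads off different distances to the two vertices. I would organize this so that the choice of dropping the longest leg (rather than an arbitrary one) is seen to be exactly what is needed both for resolvability and for matching the lower bound.

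The main obstacle I anticipate is the upper-bound case analysis: making fully rigorous the claim that a single listener at the end of each retained leg separates every relevant pair, especially pairs where one vertex lies deep inside one leg and the other lies on the spine or in another branch, requires care with how distances decompose through the branching points. A clean way to control this is to prove a small auxiliary lemma computing $d(u,l)$ for $l$ the endpoint of a leg at $x$ in terms of $d(u,x)$ and the leg length, and then to reduce all cases to comparisons of these closed-form expressions. I would expect the lower bound to be comparatively routine once the ``two detector-free legs are fatal'' observation is in place; the subtlety is only in confirming that the per-branching-point bounds add up without double counting, which follows from the disjointness of legs belonging to different special branching points.
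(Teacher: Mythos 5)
First, a point of reference: the paper itself contains no proof of this proposition --- it is quoted as Slater's classical theorem~\cite{S75} --- so your attempt has to be judged on its own merits against the standard argument. Your lower bound is essentially correct: if two legs attached to the same special branching point $x$ carried no listener, the two vertices at distance~$1$ from $x$ on those legs would be unseparated (every vertex outside the two legs reaches both of them through $x$), so at least $\ell_x-1$ legs of $x$ must contain a listener; and since legs of distinct special branching points are vertex-disjoint, these bounds add up to $\sum_x(\ell_x-1)=|R|$. This is the same style of argument the paper uses in Observation~\ref{obs:specialBP} and Lemma~\ref{lem:legs}.

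The upper bound, however, contains a misreading and a genuine gap. The misreading: $R$ keeps the leaves of the $\ell-1$ \emph{longest} legs (it drops a shortest leg), whereas you keep ``all but the longest''; worse, you assert that this choice is ``exactly what is needed \ldots for resolvability,'' which is false --- any $\ell-1$ of the legs work, and no resolvability argument can depend on leg lengths (lengths matter in the paper's exchange arguments for detection pairs, Lemma~\ref{lemma:SBP}, not here). The gap: the pairs you single out as ``genuinely delicate'' --- two vertices on distinct legs of the same $x$ --- are in fact the easy ones: at most one leg of $x$ is listener-free, and a listener at the leaf of either leg separates any such pair. The pairs your plan does not handle are those where neither vertex is near any listener, e.g.\ two spine vertices $u,v$. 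If no listener separates such a pair, then every listener $r$ satisfies $d(u,r)=d(v,r)$, which in a tree forces the paths from $r$ to $u$ and to $v$ to pass through the midpoint $c$ of the $u$--$v$ path; hence the two components of $T-c$ containing $u$ and $v$ are both listener-free. Refuting this requires a structural argument absent from your plan: a listener-free component $C$ of $T-c$ cannot contain a vertex of degree at least~$3$ (take such a vertex farthest from $c$; all its outward subtrees are legs, there are at least two of them, so $R$ places a listener on one of them), hence $C$ is a bare path attached to $c$; but then two listener-free components at $c$ either make $c$ itself a special branching point with two listener-free legs (contradicting the definition of $R$) or force $T$ to be a path. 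Closed-form distance formulas to leg endpoints, which is all your plan offers for this case, cannot by themselves produce this existence argument. (Incidentally, the path case shows the proposition needs the standing assumption that $T$ is not a path: for a path, $R=\emptyset$ is not a resolving set.)
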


We have the following consequence of Proposition~\ref{prop:MD-trees}.

\begin{theorem}[Slater \cite{S75}]\label{thm:slater}
\MDopt can be solved optimally in linear time on trees.
\end{theorem}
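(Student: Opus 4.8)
The plan is to derive Theorem~\ref{thm:slater} as a direct algorithmic consequence of Proposition~\ref{prop:MD-trees}. Proposition~\ref{prop:MD-trees} already tells us \emph{which} vertices form an optimal resolving set: for each special branching point $x$ with $\ell$ attached legs, we include the leaf endpoints of the $\ell-1$ longest such legs. So the entire content of the theorem is that this prescription can be \emph{computed} in linear time. I would therefore structure the proof as a description of a linear-time procedure that materializes the set $R$ guaranteed by the proposition, followed by a running-time analysis.

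First I would note that one can identify all the relevant structural features of $T$ in a single traversal (say a depth-first search). In linear time $O(|V(T)|)$ we can compute the degree of every vertex, hence mark all leaves (degree $1$) and all branching points (degree $\geq 3$). Next I would explain how to find the legs: starting from each leaf, one walks inward along the unique path of degree-$2$ vertices until reaching a vertex of degree $\neq 2$; if that vertex is a branching point, then the leaf sits at the end of a leg attached to that special branching point, and the number of degree-$2$ vertices traversed (plus one) gives the length of the leg. Because each vertex of degree $2$ lies on at most one such maximal path, the total work across all these walks is linear: every edge and vertex of $T$ is visited a bounded number of times.

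Then I would describe how to assemble $R$. For each special branching point $x$, collect the lengths of its attached legs together with the identity of the corresponding leaves. To select the $\ell-1$ longest legs and discard one longest leg, it suffices to find a single maximum-length leg at $x$ and add all the other leaf endpoints to $R$; ties can be broken arbitrarily, which is consistent with Proposition~\ref{prop:MD-trees} since only the multiset of selected leg lengths matters. Summed over all special branching points, each leg is examined once, so this step is again linear. Correctness is immediate: by construction $R$ is exactly (a valid realization of) the set described in Proposition~\ref{prop:MD-trees}, which that proposition certifies to be an optimal resolving set of $T$.

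I do not expect a genuine mathematical obstacle here, since all the hard combinatorial work is encapsulated in Proposition~\ref{prop:MD-trees}; the only thing to be careful about is the bookkeeping that keeps the total running time linear. The main point to argue cleanly is that the leg-finding walks do not overlap in a way that could blow up the complexity—i.e.\ that the degree-$2$ paths hanging off distinct leaves are edge-disjoint and collectively contain at most $|V(T)|$ vertices—so that the repeated inward walks amount to a single linear-time pass over the tree. Handling the degenerate cases (the path $P_n$, which has no branching point and needs exactly one listener, and small trees) can be dispatched with a short remark rather than a separate argument.
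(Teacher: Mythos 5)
Your proposal is correct and matches the paper's (implicit) approach: the paper states Theorem~\ref{thm:slater} as a direct consequence of Proposition~\ref{prop:MD-trees} with no further proof, citing Slater, and your linear-time realization of that proposition's set $R$ (leaf/branching-point identification plus edge-disjoint inward walks along degree-$2$ paths) is exactly the intended argument. Your remark about the degenerate path case is a sensible addition, since Proposition~\ref{prop:MD-trees} as stated prescribes $R=\emptyset$ for paths, which must be patched to a single listener.
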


\subsection{Observations and lemmas about detection pairs}

The following easy observations and lemmas will be useful.

\begin{observation}\label{obs:specialBP}
Let $G$ be a graph and $B_{2^+}$ the set of its special branching points with at least two legs attached. Then, we have $DP(G)\geq |B_{2^+}|$.
\end{observation}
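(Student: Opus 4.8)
The plan is to fix an arbitrary detection pair $(W,L)$ and charge to each special branching point $x$ with at least two legs a distinct detector of $W\cup L$. Since a vertex may simultaneously carry a watcher and a listener, we have $||(W,L)|| = |W|+|L| \ge |W\cup L|$, so producing $|B_{2^+}|$ pairwise distinct vertices of $W\cup L$ will give $\DP(G)\ge|B_{2^+}|$. Concretely, for each $x\in B_{2^+}$ I would show that the set $V_x$ consisting of $x$ together with all vertices lying on legs attached to $x$ (the natural analogue of the subtree $L(x)$) contains at least one detector, and then observe that these detectors are automatically distinct.

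The distinctness is the easy structural part. The sets $V_x$, for $x$ ranging over special branching points, are pairwise disjoint: every vertex of a leg has degree at most $2$ and lies on the unique pendant path joining its leaf to the branching point it is attached to, so legs of distinct branching points share no vertex; and the only vertex of degree at least $3$ inside $V_x$ is $x$ itself, so a special branching point $y\ne x$ cannot lie in $V_x$. Hence a detector placed in $V_x$ is charged to $x$ and to no other branching point.

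The core step, which I would prove by contradiction, is that each $V_x$ with $x\in B_{2^+}$ contains a detector. Suppose $V_x$ contains no watcher and no listener. Choose two legs attached to $x$ and let $a_1,a_2$ be the neighbours of $x$ on them; these are distinct vertices of $V_x$ with $d(x,a_1)=d(x,a_2)=1$. Because each leg is pendant and attached to the rest of the graph only through $x$, every vertex of $N[a_1]$ and of $N[a_2]$ lies in $V_x$, so neither $a_1$ nor $a_2$ is dominated. To distinguish the undominated vertex $a_1$, the definition then forces a listener separating $a_1$ from the (also undominated) vertex $a_2$. But every vertex outside the two chosen legs, and $x$ in particular, reaches both $a_1$ and $a_2$ through $x$ and is therefore equidistant from them; the only conceivable separators lie strictly inside the two legs, hence in $V_x$, where by assumption there are none. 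This contradicts $a_1$ being distinguished.

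The subtlety I would flag as the main obstacle is deciding what to charge to $x$: the relevant detector need not sit on a leg, since one watcher at $x$ already dominates the first vertex of every leg and may be the sole detector handling $x$. This is exactly why the charge is to $V_x$ (which includes $x$) rather than to the legs alone, and why the disjointness argument must forbid two branching points from sharing $x$ as well as from sharing leg vertices. With this bookkeeping in place, the short-leg case — a leg of length one, i.e.\ a leaf adjacent to $x$ — needs no separate treatment, as the pair $a_1,a_2$ at distance one covers it uniformly.
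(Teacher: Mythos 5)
Your proof is correct and follows essentially the same approach as the paper: the paper likewise shows that for each $x\in B_{2^+}$ the set consisting of $x$ and its leg vertices must contain a detector, arguing that otherwise two leg-neighbours of $x$ are neither dominated nor separated. Your write-up merely makes explicit two points the paper leaves implicit, namely the pairwise disjointness of the sets $V_x$ and the inequality $|W|+|L|\geq |W\cup L|$.
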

\begin{proof}
Let $(W,L)$ be a detection pair of $G$. We show that for each special branching point $x$ of $B_{2^+}$, there is at least one detector of $(W,L)$ among $x$ and the vertices belonging to a leg attached to $x$. Indeed, if not, then any two neighbours of $x$ belonging to this set would neither be dominated by a watcher, nor separated by a listener, a contradiction.
\end{proof}

\begin{lemma}\label{lem:legs}
Let $x$ be a special branching point of a graph $G$ with a set $S_x$ of $t$ leaf neighbours and a set $\mathcal L_x$ of $\ell$ legs of length at least~$2$ attached to $x$. Let $(W,L)$ be a detection pair of $G$. If there is a watcher at $x$, at least $\ell-1$ of the legs in $\mathcal L_x$ contain a detector of $(W,L)$. Otherwise, at least $t+\ell-1$ legs attached to $x$ contain a detector. 
\end{lemma}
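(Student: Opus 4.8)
The plan is to establish, in each of the two cases, that at most one of the relevant legs can be free of detectors; this immediately yields the stated bounds (at least $\ell-1$, resp. at least $t+\ell-1$). In each case I argue by contradiction, assuming two detector-free legs and exhibiting a pair of vertices that $(W,L)$ fails to distinguish. The structural fact I will rely on throughout is that a leg is a pendant path whose only connection to the rest of $G$ is through $x$. Hence, for a vertex $u$ at distance $d$ from $x$ on a leg $P$ and any vertex $y\notin P$, every shortest path from $y$ to $u$ passes through $x$, so $d(y,u)=d(y,x)+d$. This identity is exactly what forces equal distances, and therefore non-separation, for equidistant vertices on distinct legs (and it applies in particular to $y=x$, since $x$ does not belong to any of its legs).

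For the first case (a watcher at $x$), suppose two legs $P,Q\in\mathcal{L}_x$ both contain no detector. As each has length at least~$2$, pick the vertices $p_2\in P$ and $q_2\in Q$ at distance exactly~$2$ from $x$. I first check that neither is dominated: the watcher at $x$ reaches only the distance-$1$ leg-vertices, and any other watcher dominating $p_2$ (resp. $q_2$) would have to lie on $P$ (resp. $Q$), which is detector-free. Next, by the identity above with $d=2$, every vertex off $P\cup Q$ --- including $x$ --- is equidistant from $p_2$ and $q_2$, while $P\cup Q$ carries no listener; so no listener separates them. Thus $p_2$ is not distinguished, a contradiction. Hence at most one leg of $\mathcal{L}_x$ is detector-free, and at least $\ell-1$ contain a detector.

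For the second case (no watcher at $x$), I run the same scheme on the distance-$1$ vertices, which lets the length-$1$ legs of $S_x$ be treated exactly like the longer legs of $\mathcal{L}_x$. Among the $t+\ell$ legs attached to $x$, suppose two, say $A$ and $B$, are detector-free, with starting vertices $a_1,b_1$ at distance~$1$ from $x$. Their only neighbours are $x$ and (for a longer leg) the second leg-vertex; since there is no watcher at $x$ and the legs are detector-free, neither $a_1$ nor $b_1$ is dominated. The identity with $d=1$ shows that any listener off $A\cup B$, and in particular a listener at $x$, is equidistant from $a_1$ and $b_1$, and $A\cup B$ carries no listener; so $a_1$ is not distinguished, a contradiction. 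Therefore at most one of the $t+\ell$ legs is detector-free, giving at least $t+\ell-1$ with a detector.

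The two cases are structurally identical, and the only delicate point --- the one I would be most careful about --- is the bookkeeping of which vertices can dominate the chosen pair and which listeners can separate it. In particular, the hypotheses forbid a watcher at $x$ only in the second case, so a listener may still sit at $x$; verifying that such a listener is nonetheless equidistant from the two chosen vertices is precisely what the pendant-path identity delivers, and it is the step most easily overlooked.
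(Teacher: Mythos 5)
Your proof is correct and follows essentially the same route as the paper's: assume two detector-free legs and derive a contradiction by showing that the vertices at distance~$2$ from $x$ (when $x$ has a watcher) or at distance~$1$ from $x$ (otherwise) are neither dominated nor separated by any listener. You simply spell out the pendant-path distance identity $d(y,u)=d(y,x)+d$ that the paper leaves implicit, which is a fine (and careful) elaboration of the same argument.
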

\begin{proof}
If there is a watcher at $x$, assume for a contradiction that two legs $L_1$ and $L_2$ of $\mathcal L_x$ contain no detector. Then, neither the listeners of $(W,L)$ nor the watcher at $x$ can distinguish the two vertices of $L_1$ and $L_2$ that are at distance~$2$ from $x$, a contradiction. A similar argument holds for vertices at distance~$1$ of $x$ when $x$ has no watcher.
\end{proof}

\begin{lemma}\label{lem:unique-leg}
Let $G$ be a graph with a special branching point $x$ having a leg attached, whose leaf is $y$. If $G$ has a detection pair $(W,L)$ with $x\in L$, then $(W,L\setminus\{x\}\cup\{y\})$ is also a detection pair of $G$.
\end{lemma}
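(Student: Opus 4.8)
The plan is to prove the stronger statement that moving the listener from $x$ to the leaf $y$ can only enlarge the set of pairs of vertices separated by the listeners; since the watcher set $W$ is left unchanged, this is enough to conclude. First I would set up notation for the leg attached to $x$ and ending at $y$, writing it as a path $x=p_0,p_1,\dots,p_m=y$ whose inner vertices $p_1,\dots,p_{m-1}$ have degree~$2$ and with $p_m=y$ a leaf. The entire argument hinges on one distance identity. Since $y$ is a leaf and the inner vertices of the leg have degree~$2$, the vertex $x$ is a cut vertex separating $y$ from the rest of $G$; hence for every vertex $z$ not on the leg we have $d(z,y)=d(z,x)+m$, while for a vertex $p_i$ on the leg we simply read off $d(p_i,x)=i$ and $d(p_i,y)=m-i$.

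With these identities, the key step is to show that for all vertices $u,v$, if $d(u,x)\neq d(v,x)$ then $d(u,y)\neq d(v,y)$; equivalently, $d(u,y)=d(v,y)$ implies $d(u,x)=d(v,x)$. I would check this by a brief case distinction according to whether $u$ and $v$ lie on the leg. If both lie off the leg, the identity $d(\cdot,y)=d(\cdot,x)+m$ shows the distances to $y$ agree exactly when those to $x$ do. If both lie on the leg, $m-i=m-j$ forces $u=v$. In the remaining mixed case, with (say) $u=p_i$ on the leg and $v$ off it, we get $d(u,y)=m-i\leq m-1<m\leq d(v,y)$, so $u$ and $v$ are automatically separated by $y$ and there is nothing to prove.

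Finally I would assemble the pieces. Set $L'=L\setminus\{x\}\cup\{y\}$. Every pair separated by a listener of $L\setminus\{x\}$ remains separated in $L'$, and by the key step every pair separated by $x$ is separated by $y\in L'$; therefore the set of pairs separated by $L'$ contains that separated by $L$. Now let $u$ be any vertex. As $(W,L)$ is a detection pair, $u$ is distinguished by $(W,L)$: either $u$ is dominated by a watcher of $W$---in which case it still is, since $W$ is unchanged---or each other vertex $v$ is either dominated by $W$ or separated from $u$ by a listener of $L$, hence by a listener of $L'$. In either case $u$ is distinguished by $(W,L')$, so $(W,L')$ is a detection pair.

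I do not anticipate a serious obstacle, as the statement ultimately expresses the intuitive fact that pushing a listener outward to the tip of a leg only refines the distance information it provides. The single point requiring care is the justification of the identity $d(z,y)=d(z,x)+m$ for vertices $z$ off the leg, i.e.\ the observation that $y$ admits no shortcut to the rest of the graph; this is exactly where the leg structure around $x$ (degree-$2$ inner vertices and $y$ a leaf) is used.
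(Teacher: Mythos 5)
Your proof is correct and follows essentially the same route as the paper's: both rest on the cut-vertex identity $d(z,y)=d(z,x)+m$ for vertices $z$ off the leg, from which it follows that the listener at $y$ provides at least as much separation as the listener at $x$. Your packaging of this as a single monotonicity claim (every pair separated by $x$ is separated by $y$, so the separated pairs only grow) is a slightly cleaner organization that also makes explicit the cases of pairs involving $x$ itself and the leg vertices, which the paper's terser proof handles implicitly via the observation that each leg vertex is the unique vertex at its distance from $y$.
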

\begin{proof}
Let $V_x$ be the set of vertices containing $x$ and the vertices of the leg attached to $x$ whose leaf is $y$. Clearly, a pair $u,v$ of vertices in $V(G)\setminus V_x$ are separated by $x$ if and only if they are separated by $y$. Moreover, all the vertices in the leg containing $y$ are clearly distinguished, because each such vertex is the unique one with its distance to $y$. This completes the proof.
\end{proof}

\section{General approximability and non-approximability}\label{sec:gen}

In this section, we discuss the general approximation complexity of \DPopt.

\begin{theorem}
\DPopt can be approximated within a factor $2\log n+1$ on graphs of order $n$.
\end{theorem}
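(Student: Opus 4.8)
The plan is to reduce \DPopt to \SCopt and then invoke the classical greedy $(\ln|X|+1)$-approximation for set cover, where $X$ denotes the universe. The first step is to reformulate the detection-pair condition in a purely pairwise form: a pair $(W,L)$ is a detection pair of $G$ if and only if for every unordered pair $\{u,v\}$ of distinct vertices, at least one of the following holds: $u$ is dominated by a watcher of $W$, $v$ is dominated by a watcher of $W$, or $u$ and $v$ are separated by a listener of $L$. This equivalence follows by unfolding the definition of ``distinguished'': if $u$ is not dominated, then $u$ is distinguished precisely when every other vertex $v$ is either dominated or separated from $u$ by a listener, which is exactly the pairwise condition restricted to the pairs containing $u$.

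With this reformulation in hand, I would build a set cover instance $H=(X,\mathcal S)$ as follows. The universe is $X=\{\{u,v\}:u,v\in V(G),\ u\neq v\}$, the set of all $\binom{n}{2}$ unordered pairs of vertices. For each vertex $w$ of $G$ I introduce a \emph{watcher set} $S_w^W=\{\{u,v\}\in X:\{u,v\}\cap N[w]\neq\emptyset\}$, consisting of the pairs handled by placing a watcher at $w$, and for each vertex $l$ I introduce a \emph{listener set} $S_l^L=\{\{u,v\}\in X:d(u,l)\neq d(v,l)\}$, consisting of the pairs separated by a listener at $l$. These $2n$ sets can be computed in polynomial time, for instance by running a breadth-first search from each vertex to obtain all pairwise distances.

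The key correspondence is then that a subcollection $\mathcal C\subseteq\mathcal S$ covers $X$ if and only if the associated watchers and listeners form a detection pair of $G$, and moreover $|\mathcal C|$ equals the size $|W|+|L|$ of that pair. Indeed, a pair $\{u,v\}$ lies in $\bigcup_{w\in W}S_w^W\cup\bigcup_{l\in L}S_l^L$ exactly when it satisfies the pairwise disjunction above. Hence the optimum of $H$ equals $\DP(G)$, and any $c$-approximate set cover yields a detection pair of size at most $c\cdot\DP(G)$. Applying the greedy algorithm, whose ratio is at most $H_{|X|}\le 1+\ln|X|$, and using $|X|=\binom{n}{2}\le n^2/2$, gives a factor at most $1+\ln(n^2/2)<1+2\ln n$, which is the claimed $2\log n+1$.

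I expect the reduction itself to be routine once the pairwise reformulation is established; the only point genuinely requiring care is verifying the correspondence precisely. In particular I would confirm that a watcher at $w$ must be taken to handle \emph{every} pair meeting $N[w]$ (rather than only pairs of already-dominated vertices), and that the size bookkeeping is exact: the watcher set $S_w^W$ and the listener set $S_w^L$ at a common vertex $w$ are distinct members of $\mathcal S$, so selecting both contributes $2$ to the cover size, matching the contribution of $w$ to $|W|+|L|$ when $w\in W\cap L$ (which the problem explicitly permits). With these checks in place there is no double-counting issue, and the equality $\DP(G)=\mathrm{OPT}(H)$ holds exactly.
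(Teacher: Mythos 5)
Your proposal is correct and follows essentially the same route as the paper: the identical reduction to \SCopt (universe of vertex pairs, one watcher-set and one listener-set per vertex) followed by the greedy $(\ln|X|+1)$-approximation, with $|X|=\binom{n}{2}$ giving the factor $2\log n+1$. Your write-up is in fact slightly more careful than the paper's, since you make the pairwise reformulation of the detection-pair condition and the size bookkeeping for $W\cap L\neq\emptyset$ explicit.
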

\begin{proof}
Given a graph $G$, we build an instance $(X,S)$ of \SCopt as follows. We let $X$ be the set of vertex pairs of $G$. For each vertex $v$ of $G$, we have a set $W_v$ of $S$ that contains an element $\{x,y\}$ of $X$ if $v$ dominates at least one of $x$ and $y$. We also have a set $L_v$ that contains an element $\{x,y\}$ of $X$ if we have $d(v,x)\neq d(v,y)$. Therefore we have $|X|=\binom{|V(G)|}{2}$ and  $|S|=2|V(G)|$. Now, we claim that we have a one-to-one correspondance between the detection pairs of $G$ and the set covers of $(X,S)$. Indeed, for every vertex $v$ of $G$, the set $W_v$ of $S$ corresponds to a watcher placed at vertex $v$, and the set $L_v$ corresponds to a listener placed at vertex $v$. Moreover set $U_v$ of $S$ (with $U\in\{W,L\}$) covers exactly the elements of $X$ that correspond to a pair that is separated by the according detector placed on $v$ in $G$. Since \SCopt is approximable within a factor of $\log(|X|)+1$ in polynomial time~\cite{J74}, the result follows.
\end{proof}

Hartung and Nichterlein~\cite{HN13} provided a reduction from \DSdec to \MDdec. This reduction was improved by Hartung in his PhD thesis~\cite{seppThesis} to get a reduction mapping any instance $(G,k)$ to an instance $(G',k+4)$, where $G'$ is a bipartite graph of maximum degree~$3$ and has girth $4|V(G)|+6$. In fact, it is not difficult to see that for any detection pair $P$ of $G'$ of size $k'$, there is a detection pair $P'$ of $G'$ of size at most $k'$ containing no watchers. In other words, we have $\DP(G')=\MD(G')$, and therefore the aforementioned reduction is also a reduction to \DPdec. Hence we have the following.\footnote{Hartung and Nichterlein only state the hardness of approxmation for factors in $o(\log n)$, but a recent result on the inapproximability of \SCopt~\cite{DS14}, that transfers to \DSopt via standard reductions, implies our stronger statement.}

\begin{theorem}[Hartung and Nichterlein \cite{seppThesis,HN13}]
\DPopt is NP-hard to approximate within a factor of $(1-\epsilon)\log n$ (for any $\epsilon>0$) for instances of order~$n$, and \DPdec is W[2]-hard (parameterized by the solution size $k$). Let $g\geq 4$ be an arbitrary integer. These results hold even for instances that are bipartite, have maximum degree~$3$, and girth at least~$g$.
\end{theorem}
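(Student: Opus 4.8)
The plan is to leverage the existing reduction of Hartung and Nichterlein from \DSdec to \MDdec, together with the claim asserted in the text that $\DP(G')=\MD(G')$ on the constructed instances $G'$. First I would recall the structure of that reduction: it maps an instance $(G,k)$ of \DSdec to an instance $(G',k+4)$ of \MDdec, where $G'$ is bipartite, has maximum degree~$3$, and (by Hartung's thesis improvement) girth $4|V(G)|+6$, which can be made at least any prescribed $g\geq 4$ by a suitable padding or by noting the girth already grows with $|V(G)|$. The key observation, already sketched in the excerpt, is that in $G'$ one never gains by using watchers: any detection pair $P$ of size $k'$ can be converted, without increasing its size, into one consisting solely of listeners. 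I would make this precise by arguing that each watcher in $G'$ can be replaced by a listener (or removed) while preserving the detection-pair property, exploiting the large girth and bounded degree of $G'$ so that a watcher dominates only a small, tree-like neighbourhood whose separation can be recovered by an appropriately placed listener.

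Granting $\DP(G')=\MD(G')$, the reduction from \DSdec to \MDdec becomes verbatim a reduction from \DSdec to \DPdec, mapping $(G,k)$ to $(G',k+4)$. The hardness transfers as follows. For the W[2]-hardness, I would use that \DSdec parameterized by solution size is W[2]-hard, and that $k\mapsto k+4$ is a valid parameterized reduction (the parameter increases only by an additive constant), so \DPdec is W[2]-hard parameterized by $k$. For the approximation lower bound, I would invoke the inapproximability of \SCopt within $(1-\epsilon)\log n$ from~\cite{DS14}, which transfers to \DSopt via the standard approximation-preserving equivalence between \SCdec and \DSdec; since the reduction changes the optimum only by the additive constant $4$ and preserves the order of the graph up to a polynomial factor, the $(1-\epsilon)\log n$ hardness carries over to \DPopt (absorbing the additive distortion and the polynomial blow-up of $n$ into the $\epsilon$).

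The structural properties are handled by the reduction itself: since $G'$ is already bipartite, subcubic, and of girth $4|V(G)|+6$, all three restrictions in the statement hold simultaneously, and for any fixed target girth $g\geq 4$ the bound $4|V(G)|+6\geq g$ holds for all but finitely many instances (and the finitely many small instances can be resolved directly), so the hardness persists on graphs of girth at least~$g$.

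The main obstacle I expect is the claim $\DP(G')=\MD(G')$, i.e.\ justifying rigorously that watchers are useless in $G'$. This requires a careful local analysis of the gadgets of Hartung and Nichterlein: one must show that wherever a watcher could dominate a pair of vertices, the bounded degree and large girth force a listener placement that separates the same pairs without increasing the total size. The argument is plausible precisely because the girth is enormous relative to the degree, so closed neighbourhoods are paths or small trees and listeners ``see'' distances cleanly, but pinning down that the exchange never increases $|W|+|L|$ for every gadget pair is the delicate step. The remaining transfers of W[2]-hardness and of the approximation threshold are then routine, as they only rely on the additive and polynomial distortions being absorbable into the respective hardness statements.
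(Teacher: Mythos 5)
Your proposal follows essentially the same route as the paper: reuse the Hartung--Nichterlein reduction (in its improved form from Hartung's thesis), argue that watchers are superfluous on the constructed instances so that $\DP(G')=\MD(G')$, and then transfer both the W[2]-hardness of \DSdec and the $(1-\epsilon)\log n$ inapproximability of \SCopt (via \DSopt, using~\cite{DS14}) to \DPdec and \DPopt. The step you flag as delicate --- rigorously eliminating watchers --- is exactly the step the paper itself leaves as ``not difficult to see,'' so your treatment is no less complete than the paper's own.
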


\section{A $2$-approximation algorithm on trees}\label{sec:2approx-trees}

In this section, we prove the following approximability result.

\def\S2{S_{2}}
\def\S3{S_{3^+}}

\begin{theorem}
There is a linear-time $2$-approximation algorithm for \DPopt on trees.
\end{theorem}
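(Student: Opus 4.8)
The plan is to start from Slater's optimal resolving set of Proposition~\ref{prop:MD-trees} and to trade listeners for watchers locally at every special branching point that carries several leaf-neighbours. Concretely, I would first compute in linear time (Theorem~\ref{thm:slater}) the resolving set $R$ that places, at each special branching point $x$ with $t$ leaf-neighbours and $\ell$ legs of length at least~$2$, a listener on the leaf of every long leg and on $t-1$ of the leaves. I would then process each $x$ that is a $t$-stem with $t\geq 2$: delete the (up to) $t-1$ listeners that $R$ put on leaf-neighbours of $x$, and instead place a single watcher at $x$, while \emph{keeping} every listener that $R$ placed on a leg of length at least~$2$. Slater's algorithm and this bottom-up modification both run in linear time, so the whole procedure does.

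Next I would verify that the resulting pair $(W,L)$ is a detection pair. Adding a watcher only dominates more vertices, so the only danger is the deletion of the leaf-listeners. The key point is that every listener on a leaf-neighbour $a$ of $x$ realises the same distance function $d(a,\cdot)=1+d(x,\cdot)$ on all vertices outside the leaf-legs of $x$; hence such listeners can separate a pair of still-undominated vertices only according to their distance to $x$. If $\ell\geq 1$, a retained listener on a long leg of $x$ reproduces exactly this separation (off its own leg its distance function differs from $1+d(x,\cdot)$ only by an additive constant, and inside its leg it separates strictly more), so no separated pair of undominated vertices is lost. The only pairs a leaf-listener separates that a long-leg listener might not are pairs among the leaf-neighbours of $x$ themselves, and these are now dominated by the new watcher; together with the fact that the watcher dominates all of $N[x]$, the same reasoning settles the case $\ell=0$. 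Thus $(W,L)$ is again a detection pair.

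For the ratio I would bound $\DP(T)$ from below by summing local contributions over the special branching points. Since the legs of distinct special branching points are vertex-disjoint and a branching point lies on no leg, Observation~\ref{obs:specialBP} and Lemma~\ref{lem:legs} give, for an optimal pair $(W^\star,L^\star)$ and each $x\in B_{2^+}$, a local contribution (counting the watcher at $x$ and the detectors in the legs of $x$) of at least $t_x+\ell_x-1$ when $x\notin W^\star$ and at least $\max\{\ell_x,1\}$ when $x\in W^\star$; hence
\[
\DP(T)\ \geq\ \sum_{x\in B_{2^+}}\min\bigl\{\max\{\ell_x,1\},\,t_x+\ell_x-1\bigr\}.
\]
My algorithm pays $t_x+\ell_x-1$ at each $x$ with $t_x\leq 1$ (where it never places a watcher, matching the bound), and $1+\ell_x$ at each $t$-stem with $t_x\geq 2$ (where $t_x+\ell_x-1\geq\max\{\ell_x,1\}$, so the lower-bound term is $\max\{\ell_x,1\}$). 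Comparing termwise, $1+\ell_x\leq 2\max\{\ell_x,1\}$ always, with equality exactly when $\ell_x=1$, while the $t_x\leq 1$ terms match; summing yields a detection pair of size at most $2\,\DP(T)$.

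The main obstacle is the correctness of the listener deletions, that is, certifying that discarding the leaf-listeners never destroys the separation of a far-away pair. This is what forces the conservative choice to retain a listener on \emph{every} long leg of a stem, rather than leaving one long leg listener-free as Slater's construction does: leaving a long leg free would demand a globally consistent selection of free legs across all branching points to avoid distance collisions, which is precisely the phenomenon behind the NP-hardness on trees. Keeping all long-leg listeners sidesteps this global coordination at the price of one superfluous listener at each stem whose long leg is unique, which is exactly what makes the analysis tight at factor~$2$ (the extremal case being a stem with exactly one long leg) and keeps the algorithm from being exact.
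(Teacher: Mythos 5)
There is a genuine gap: the feasibility of your output fails at special branching points with no long legs ($\ell=0$), and the sentence ``together with the fact that the watcher dominates all of $N[x]$, the same reasoning settles the case $\ell=0$'' is exactly where the proof breaks. A listener on a leaf-neighbour $a$ of $x$ satisfies $d(a,\cdot)=1+d(x,\cdot)$ everywhere outside $a$, so it separates every pair $u,v$ with $d(x,u)\neq d(x,v)$, \emph{no matter how far from $x$} these vertices lie; when $\ell\geq 1$ you correctly argue that a retained long-leg listener takes over this job, but when $\ell=0$ nothing takes it over, and the new watcher compensates only inside $N[x]$. Concretely, let $T$ be the tree obtained from the path $x\,p_1\,p_2\,p_3\,p_4\,y$ by attaching two leaves $a_1,a_2$ to $x$ and two leaves $b_1,b_2$ to $y$. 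Both $x$ and $y$ are $2$-stems with $\ell=0$, and Slater's set (Proposition~\ref{prop:MD-trees}) is $R=\{a_1,b_1\}$. Your algorithm deletes both listeners and outputs $(\{x,y\},\emptyset)$; but $p_2$ and $p_3$ are neither dominated by a watcher nor separated by any listener, so the output is not a detection pair (while $\DP(T)=2$, witnessed by $(\emptyset,R)$). The same failure occurs for $t$-stems with any $t\geq 2$ joined by a long enough path, so this is a structural flaw, not a boundary case. Your lower bound on $\DP(T)$ and the termwise ratio comparison are fine; it is only the feasibility claim that fails --- and note that your accounting charges just $1$ at $\ell=0$ stems, which is precisely the saving that a correct algorithm cannot make in general.

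This is exactly the point where the paper's algorithm differs. It places watchers only at $t$-stems with $t\geq 3$, and, crucially, it never leaves the vicinity of a special branching point listener-free: it builds a modified tree $T'$ by deleting $t-2$ leaf-neighbours of such a stem when $\ell=0$, $t-1$ when $\ell=1$, and all $t$ when $\ell\geq 2$, so that every stem retains at least two legs in $T'$, and then it runs Slater's algorithm on $T'$ (distances between surviving vertices are unchanged, so the resulting listeners still resolve $V(T')$ inside $T$, while the deleted leaves are handled by the watchers). Hence a stem with $\ell=0$ ends up with a watcher \emph{plus} one nearby leaf-listener, at cost $2$ against an optimum of possibly $1$ --- this is exactly the paper's tight family $T^1_\ell$, i.e.\ the factor~$2$ is paid in the very case where your algorithm saves the listener and becomes incorrect. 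If you patch your procedure by retaining one leaf-listener at every stem where you currently delete them all, and then redo the ratio bookkeeping, you essentially arrive at the paper's algorithm and its analysis.
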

\begin{proof}
Given an input tree $T$, we denote by $\S3$, the set of vertices of $T$ that are $t$-stems for some $t\geq 3$. Moreover, we define $T'$ as the tree obtained from $T$ as follows. For each vertex $v$ of $\S3$ (that is, $v$ is a $t$-stem for some $t\geq 3$), assuming that there are $\ell$ legs of length at least~$2$ attached to $v$ in $T$, we remove from $T$: (i) $t-2$ leaf neighbours of $v$ if $\ell=0$, (ii) $t-1$ leaf neighbours of $v$ if $\ell=1$, and (iii) all $t$ leaf neighbours of $v$ if $\ell\geq 2$. Therefore, in $T'$, there are at least two legs attached to each vertex of $\S3$.

Let us describe our algorithm $A$.

\begin{itemize}
\item[1.] Compute the set $\S3$ and build $T'$ from $T$.
\item[2.] Compute an optimal resolving set $R'$ of $T'$.
\item[3.] Output $A(T)=(\S3,R')$.
\end{itemize}

We first prove that $A(T)$ is a detection pair of $T$. Notice that the distances from listeners in $R'$ and vertices of $V(T')$ are the same in $T$ and $T'$. Therefore, all pairs of vertices of $V(T')$ are distinguished in $T$ by the listeners of $A(T)$. On the other hand, all vertices of $V(T)-V(T')$ are dominated by the watchers of $\S3$. Therefore, $A(T)$ is a detection pair of $T$.

Moreover, it is clear that $\S3$ and $T'$ can be computed in linear time, and $R'$ can be computed in linear time using Slater's algorithm (Theorem~\ref{thm:slater}). Therefore $A$ is a linear-time algorithm.

It remains to prove that $||A(T)||=|\S3|+\MD(T')\leq 2\cdot\DP(T)$ for any tree $T$. Notice that $A(T)$ contains only vertices that are part of some subtree $L(x)$, where $x$ is a special branching point of $T$; indeed, any vertex of $\S3$ is a special branching point of $T$, and any special branching point of $T'$ is also a special branching point of $T$ (by Proposition~\ref{prop:MD-trees}, any optimal resolving set of $T'$ only contains vertices that belong to some subtree $L(x)$ of $T'$). Therefore, it suffices to prove that for any detection pair $D=(W,L)$ of $T$, for any optimal resolving set $R'$ of $T'$ and for every special branching point $x$ of $T$, we have $|\S3\cap V(L(x))|+|R'\cap V(L(x))|\leq 2|W\cap V(L(x))|+2|L\cap V(L(x))|$. To this end, let us assume that $x$ is a $t$-stem for some $t\geq 0$, and that there are $\ell$ legs of length at least~$2$ attached to $x$. We distinguish the following cases.

\medskip

\noindent\textbf{Case 1: \boldmath{$\ell\leq 1$}.} Then, $L(x)$ contains at most two detectors of $A(T)$. Indeed, if $t\leq 2$, it contains one or two listeners; if $t\geq 3$, it contains one watcher and one listener. By Lemma~\ref{lem:legs}, there is at least one detector in $L(x)$, which completes the proof of this case.

\medskip

\noindent\textbf{Case 2: \boldmath{$\ell\geq 2$}.} Then, by Lemma~\ref{lem:legs}, any detection pair $(W,L)$ of $T$ has a detector in at least $\ell-1$ legs of length at least~$2$ of $L(x)$. If $t\geq 3$ (that is, $x\in\S3$), $A(T)$ has exactly $\ell-1$ listeners and one watcher in $L(x)$. Therefore, for any detection pair $(W,L)$ of $T$, we have $2|W\cap V(L(x))|+2|L\cap V(L(x))|\geq 2\ell-2\geq \ell=|\S3\cap V(L(x))|+|R'\cap V(L(x))|$ (because $\ell\geq 2$), and we are done.

Hence, assume that $t\leq 2$. Then, $L(x)$ contains only listeners of $A(T)$, that is, precisely $t+\ell-1$ leaves of $L(x)$ that belong to $R'$. If $t\leq 1$, we have $\ell-1\geq t$ and then, for any detection pair $(W,L)$ of $T$, we have $2|W\cap V(L(x))|+2|L\cap V(L(x))|\geq 2\ell-2\geq t+\ell-1=|\S3\cap V(L(x))|+|R'\cap V(L(x))|$, and we are done. If however, $t=2$, by Lemma~\ref{lem:legs}, there are at least $\ell$ detectors of $(W,L)$ in $L(x)$. Hence, we have $2|W\cap V(L(x))|+2|L\cap V(L(x))|\geq 2\ell\geq t+\ell>t+\ell-1=|\S3\cap V(L(x))|+|R'\cap V(L(x))|$. This completes the proof.
\end{proof}

There are two infinite families of trees that show that our algorithm has no better approximation ratio than~$2$.

The first family consists of trees $T^1_\ell$ ($\ell\geq 1$) that are built from $\ell$ disjoint stars with at least three leaves each, all whose centers are adjacent to an additional single vertex. We have $||A(T^1_\ell)||=2\ell$ (this solution set contains one watcher and one listener for every star), while $\DP(T^1_\ell)\leq\ell$ (simply let every center of a star be a watcher, and select no listener). Hence the approximation ratio of $A$ on $T^1_\ell$ is $2$.

A second family consists of trees $T^2_\ell$ ($\ell\geq 1$) built from a path $P$ of $\ell+2$ vertices. For each of the two end-vertices of $P$, add two leaves that are adjacent to this end-vertex. For each degree~$2$-vertex $v$ of $P$, build a star $S_v$ with three leaves and subdivide one of its edges once. Moreover, the center of $S_v$ is made adjacent to $v$. We have $||A(T^2_\ell)||=2\ell+2$ (each star contains two listeners, and two additional listeners are selected among the four degree~$1$-vertices adjacent to end-vertices of $P$), while $\DP(T^2_\ell)\leq \ell+2$ (put a watcher on each center of a star, and add one neighbour of each endpoint of $P$ as a listener). The approximation ratio of $A$ on $T^2_\ell$ is $2-\tfrac{2}{\ell+2}$.

\section{A fixed parameter tractable algorithm on trees}\label{sec:fpt-trees}

In this section, we provide an exact algorithm for \DPdec that is FPT for the natural parameter $k$, the solution size.

The idea of the algorithm is as follows. We first search for a solution with $|L|\leq 1$. After this step, we may assume that $|L|\geq 2$, which is a technical condition required for the subsequent steps of the algorithm. Then we proceed in three phases. In the first phase, we handle the solution around the special branching points of the tree. As we will see, there is only a fixed set of possibilities to try for each such branching point. Then, in the second phase, we determine the set of remaining listeners. Here, we are able to compute a set (whose size is bounded in terms of $k$) of possible vertices that may contain a listener in an optimal solution. Finally, in the third phase, we determine the set of remaining watchers, and again we are able to compute a set of vertices (whose size is bounded in terms of $k$) that may be used as a watcher. The algorithm then checks the validity of each possible choice of these placements.

\begin{theorem}\label{thm:fpt}
\DPdec can be solved in time $2^{O(k\log k)}n^2$ on trees of order $n$.
\end{theorem}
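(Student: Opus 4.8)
The plan is to design a bounded-search-tree / kernelization-style FPT algorithm that exploits the tree structure together with the structural facts established in the preliminaries. The first observation is that by Observation~\ref{obs:specialBP} and Lemma~\ref{lem:legs}, any detection pair must ``pay'' for every special branching point with at least two legs: if $(G,k)$ is a yes-instance then the number $|B_{2^+}|$ of such branching points is at most $k$. More strongly, at each such branching point $x$ with $\ell$ legs of length at least~$2$ and $t$ leaf neighbours, Lemma~\ref{lem:legs} forces roughly $t+\ell-1$ (or $\ell-1$ if $x$ is watched) detectors inside $L(x)$. Summing these lower bounds over all special branching points and using that the total is at most $k$, I would derive that the tree, after pruning, has only $O(k)$ ``interesting'' vertices where detectors can usefully be placed, while the remaining structure consists of long induced paths and pendant legs whose internal vertices are essentially interchangeable.

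The main strategy is a reduction to a bounded structure followed by exhaustive guessing. First I would preprocess the tree using reduction rules analogous to the construction of $T'$ in Section~\ref{sec:2approx-trees}: at every special branching point, discard redundant leaf neighbours and truncate legs that are ``too long'' (longer than some $f(k)$ bound), since beyond a certain length a leg behaves identically with respect to distinguishing and can host at most one useful detector, placeable at its leaf by Lemma~\ref{lem:unique-leg}. These rules must be shown to preserve the value of $\DP$ up to the relevant threshold $k$, producing an equivalent instance whose relevant combinatorial description is bounded by a function of $k$. Lemma~\ref{lem:unique-leg} is exactly what lets me normalise every listener that sits on a branching point onto a leaf, reducing the number of candidate positions. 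After this kernelisation, the number of distinct ``slots'' into which the $\le k$ detectors may be placed, together with their types (watcher or listener), is bounded, giving roughly $k^{O(k)} = 2^{O(k\log k)}$ candidate configurations to test.

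For each guessed configuration of at most $k$ detectors, I then need to verify in polynomial time whether it forms a detection pair. The verification is where the $n^2$ factor enters: computing all pairwise distances in a tree and checking, for every pair of non-dominated vertices, that some listener separates them, is naturally a quadratic-time check. Since distances in a tree can be computed efficiently and the domination check is local, each candidate can be validated in $O(n^2)$ time, yielding the claimed total running time $2^{O(k\log k)} n^2$. The correctness of accepting when some configuration is valid, and rejecting otherwise, follows from the kernelisation guaranteeing that every minimal detection pair of the reduced instance lifts to one of the enumerated configurations.

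The hard part, I expect, will be the kernelisation step: precisely bounding how long a leg or path may remain relevant, and proving that truncating or merging the ``uninteresting'' parts of the tree neither destroys a small detection pair nor creates a spurious one. The subtlety is that a listener far away in the tree contributes distance information to many vertices simultaneously, so legs cannot be treated in complete isolation; one must argue that once a leg exceeds the threshold length, the only separating role it can play is fully captured by placing a single detector at its leaf, and that the interactions between distinct legs are governed solely by which of them contain detectors and by the ordering of their lengths (as in Slater's characterisation, Proposition~\ref{prop:MD-trees}). Making this interaction argument airtight, rather than the comparatively routine search-and-verify phase, is the crux of the proof.
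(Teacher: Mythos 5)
Your proposal has the right opening moves (bounding $|B_{2^+}|$ by $k$, normalising listeners onto leaves via Lemma~\ref{lem:unique-leg}, enumerate-and-verify with an $O(n^2)$ check), but its core step --- a one-shot kernelisation after which all detectors live in $f(k)$ canonical ``slots'' --- is precisely what you do not prove, and the specific rules you propose would not deliver it. Truncating legs longer than $f(k)$ is unsound as stated: shortening a leg changes which vertices of that leg collide (have equal distance) with vertices elsewhere relative to a given listener, and a leg truncated to length~$1$ even becomes dominable by a watcher at its branching point, which a long leg is not; the paper never truncates anything, and its Lemma~\ref{lemma:SBP} exists exactly to replace such truncation by a sound four-way normalisation of the detectors on $\{x\}\cup S_x\cup V(\mathcal L_x)$. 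Likewise, discarding ``redundant'' leaf neighbours is the $T'$ construction of the $2$-approximation, which provably loses a factor of~$2$ on the family $T^1_\ell$ --- it cannot be reused unchanged inside an exact algorithm.

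The deeper missing idea concerns the watchers. You treat candidate detector positions as if they were confined to legs and branching-point neighbourhoods, in the spirit of Slater's theorem for metric dimension. But in a detection pair, watchers must cover \emph{distance-class collisions}: in a subtree $T_x$ that no listener penetrates, every listener sees $T_x$ only through $x$, so two vertices of $T_x$ at equal distance from $x$ must have one of them dominated. The set of positions where such watchers can usefully sit (the paper's Lemma~\ref{lemm:subtree-no-listener}, a set of size $O(k_x^2)$) is defined \emph{relative to the listener placement} and can only be computed after the listeners are fixed; it is not visible from the local leg/stem structure. This is why the paper cannot kernelise once and enumerate, and instead nests its guesses: first the $4^k$ configurations at special branching points, then a bound of at most $k$ trees and $O(k)$ threads hanging off the already-distinguished region (its two Claims), then $4^{O(k)}$ listener placements at thread endpoints, and only then the $O(k^3)$ watcher candidates --- with a separate Phase for $|L|\le 1$, since the thread normalisation needs a second listener. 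Your proposal contains no substitute for this interleaving, and your own closing paragraph concedes the kernelisation is ``the crux''; as written, that crux is a gap, not a proof.
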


\subsection{Preliminary lemmas}

Before describing the algorithm, we prove a series of lemmas that will be essential.

The first lemma is useful to handle special branching points of the tree, by reducing the problem around these vertices to a fixed number of cases.

\begin{lemma}\label{lemma:SBP}
Let $G$ be a graph with a special branching point $x$ with at least two legs and $(W,L)$ an optimal detection pair of $G$. Let $S_x$ be the set of leaves adjacent to $x$, and let $\mathcal{L}_x$ be the set of legs of length at least~$2$ attached to $x$. Denote by $V(\mathcal{L}_x)$ the vertices belonging to a leg in $\mathcal{L}_x$, and let $\ell=|\mathcal{L}_x|$ and $t=|S_x|$ (so, $x$ is a $t$-stem). Then, we can obtain from $(W,L)$ another optimal detection pair of $G$ by replacing all the detectors in $V_x=\{x\}\cup S_x\cup V(\mathcal{L}_x)$ by one of the following sets of detectors:
\begin{itemize}
\item[1.] If $\ell\leq 1$, we may use:
  \begin{itemize}
  \item[(a)] a single watcher at $x$;
  \item[(b)] a watcher at $x$, and a listener at the leaf of a longest leg attached to $x$;
  \item[(c)] if $t+\ell=2$, a single listener at the leaf of a longest leg attached to $x$.
  \end{itemize}
\item[2.] If $\ell\geq 2$, we may use:
  \begin{itemize}
  \item[(a)] a watcher at $x$ and $\ell$ listeners at all leaves of the legs of $\mathcal{L}_x$;
  \item[(b)] a watcher at $x$ and $\ell-1$ listeners at all leaves of the legs of $\mathcal{L}_x$ but the shortest one;
  \item[(c)] if $t\leq 1$, $\ell+t$ listeners at all leaves of the legs attached to $x$;
  \item[(d)] if $t\leq 1$, $\ell+t-1$ listeners at all leaves of the legs attached to $x$, except for a shortest such leg.
  \end{itemize}
\end{itemize}
\end{lemma}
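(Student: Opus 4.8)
The plan is to exploit that the subtree $V_x$ communicates with the rest of $G$ only through $x$, so the detectors inside $V_x$ have a tightly restricted influence on the outside. First I would record a \emph{locality principle}: every detector lying in $V_x\setminus\{x\}$ dominates only vertices of $V_x$; every listener $l\in V_x$ separates a pair $p,q$ of vertices outside $V_x$ exactly when a listener at $x$ would, since all shortest paths leaving $V_x$ run through $x$ and hence $d(l,p)-d(l,q)=d(x,p)-d(x,q)$; and a watcher at $x$ additionally dominates the external neighbours of $x$. Thus the only information the detectors of $V_x$ transmit outward is two bits: whether some listener sits in $V_x$ (``listener-at-$x$'' power) and whether a watcher sits at $x$. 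Every replacement below aims to install a canonical detector set on $V_x$ that (i) has size at most $|W\cap V_x|+|L\cap V_x|$, (ii) preserves or strengthens these two bits, and (iii) distinguishes every vertex of $V_x$; such a set yields a detection pair of size at most that of $(W,L)$, hence by optimality another optimal one.

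Next I would dispatch internal distinguishing via a \emph{unique-distance} observation: a listener at the leaf of a leg of length $\lambda$ gives a vertex of that leg at distance $j$ from $x$ the distance $\lambda-j<\lambda$, whereas every vertex outside that leg lies at distance $\ge\lambda$ (its shortest path passes through $x$). Hence each leg vertex receives a globally unique distance and is separated from everything, and $x$ itself is the unique vertex at distance exactly $\lambda$ from such a listener. Consequently the configurations that put a listener on every leg---namely $2(a)$, $2(c)$, configuration $1(b)$, and $1(a)$ when $\ell=0$ (where the watcher at $x$ alone dominates all of $V_x$)---distinguish all of $V_x$ outright, the watcher at $x$ (in $1(a)$, $1(b)$, $2(a)$, $2(b)$) dominating $x$ and all leaves of $S_x$; together with the locality principle these also preserve the external interface. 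This settles the ``full'' configurations and leaves only those that omit one leg's listener: $1(c)$, $2(b)$, $2(d)$, and $1(a)$ when $\ell=1$.

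The crux, and the step I expect to be the main obstacle, is to justify omitting the \emph{shortest} leg. In a leg with no local detector, the vertices at distance $j\ge2$ from $x$ (and, absent a watcher at $x$, also the vertex at distance $1$) are undominated and receive from every leg-leaf listener the distance $\lambda_i+j$, exactly the value attached to any external vertex at distance $j$ from $x$; so they can be separated from such external vertices only by detectors outside $V_x$, which the replacement leaves unchanged. I would then argue by exchange: a leg vertex at depth $j$ has distance $d(l,x)+j$ to every external listener $l$ and is never dominated by an external watcher, so whether it is distinguished from a fixed external vertex depends only on $j$, not on which leg contains it. By Lemma~\ref{lem:legs}, a tight optimal pair leaves exactly one leg uncovered, so validity of $(W,L)$ forces the outside detectors to separate, from the whole exterior, every depth occurring in that uncovered leg; re-routing the uncovered leg to the \emph{shortest} one only shrinks the set of depths to be handled, whence the outside detectors still suffice. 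The remaining separations of an omitted-leg vertex from the rest of $V_x$ are immediate, since any covered leg's leaf-listener, or the watcher at $x$, separates it from every vertex of $V_x$.

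Finally I would organize the case analysis by splitting on whether $(W,L)$ has a watcher at $x$, bounding $|W\cap V_x|+|L\cap V_x|$ below by $\ell$ (watcher at $x$) or by $t+\ell-1$ (no watcher) via Lemma~\ref{lem:legs}. The replacement must never delete a needed part of the external interface but may strengthen it harmlessly, so I keep a watcher at $x$ whenever $(W,L)$ has one, keep at least one listener whenever $(W,L)$ has a listener in $V_x$, and am free to \emph{add} a watcher or extra listeners since domination and separation only ever help. Within these constraints, the choice between a full configuration and an ``omit the shortest leg'' one is dictated by tightness: when the Lemma~\ref{lem:legs} bound holds with equality, $(W,L)$ leaves exactly one leg uncovered and the omit-shortest configuration has precisely the bound's size and is valid by the exchange argument; when there is slack, $(W,L)$ covers every leg and the full configuration is affordable and self-sufficient. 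A short verification in each case---using $t+\ell-1\ge\ell+1$ when $t\ge2$ to pay for a watcher at $x$ that must be introduced whenever no listener-only configuration is available---shows the selected configuration always fits the budget, completing the proof.
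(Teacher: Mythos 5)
Your proposal is correct and follows essentially the same route as the paper's own proof: the locality of all $V_x$-detectors through $x$ (a listener in $V_x$ acts on external pairs exactly like a listener at $x$), leaf-of-leg listeners resolving entire legs, the detector counting via Lemma~\ref{lem:legs}, and, at the crux, the same exchange argument for omitting the shortest leg --- you phrase it as ``the depths needing external help only shrink when the uncovered leg is re-routed to the shortest one,'' while the paper phrases it contrapositively via a vertex $u'$ at equal depth in the leg that $(W,L)$ left uncovered, yielding a contradiction with the validity of $(W,L)$. Apart from minor slips that do not affect the argument (e.g., slack in the Lemma~\ref{lem:legs} bound need not mean every leg is covered, but the full configuration is affordable and self-sufficient regardless), there is no substantive gap.
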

\begin{proof} Clearly, if there are any watchers among $\{x\}\cup S_x$, we may replace them by a single watcher at $x$ and obtain a valid detection pair. We now distinguish two cases.

\medskip

\noindent\textbf{Case~1: \boldmath{$\ell\leq 1$}.} Then, we have $t\geq 1$. If $V_x$ contained a unique detector of $(W,L)$ that is a watcher, then it must have dominated a leaf neighbour of $x$ (since $t\geq 1$, such a neighbour exists). Then as observed at the beginning of the proof, we may replace this detector by a watcher at $x$, and we are in Case~1(a). If $V_x$ had a unique detector of $(W,L)$ that is a listener, then by Lemma~\ref{lem:legs} we must have had $t+\ell=1$. Note that any listener in $V_x$ seperates the same set of pairs of vertices from $V(G)\setminus V_x$. Hence we may replace the existing listener by a listener at the leaf of a longest leg attached to $x$, which separates at least as many vertices as any other listener in $V_x$. Then we are in Case~1(c) and we are done. Hence, we assume that $V_x$ contained at least two detectors of $(W,L)$. Then, using similar arguments one can check that the solution of Case~1(b) yields a valid detection pair of $G$ that is not larger than $(W,L)$.

\medskip

\noindent\textbf{Case~2: \boldmath{$\ell\geq 2$}.} As observed before, we can replace all watchers in $\{x\}\cup S_x$ by a single watcher at $x$. If $x$ has a watcher, we let $\mathcal L_x^+=\mathcal L_x$; otherwise, we let $\mathcal L_x^+$ be the set of all legs attached to $x$. Let $\ell^+=|\mathcal L_x^+|$.

By Lemma~\ref{lem:legs}, at least $\ell^+-1\geq 1$ of the legs in $\mathcal L_x^+$ contain a detector of $(W,L)$. Assume first that $(W,L)$ contained at least $\ell^+$ detectors of $(W,L)$ on the legs of $\mathcal L_x^+$. Then, we replace all such detectors by a listener at the leaf of each leg of $\mathcal L_x^+$. Then, any vertex of a leg $L_i$ of $\mathcal L_x^+$ is distinguished by its distance to the listener placed at the leaf of $L_i$. Since any listener in $V_x$ separates the same set of pairs in $V(G)\setminus V_x$, we have obtained a valid detection pair that is not larger than $(W,L)$ and we are in Case~2(a) or~2(c).

Suppose now that there are exactly $\ell^+-1$ detectors of $(W,L)$ on the legs of $\mathcal L_x^+$. We replace these detectors by a listener at the leaf of each of the $\ell^+-1$ longest legs of $\mathcal L_x^+$. By similar arguments as above, all vertices in $V_x$ that do not belong to the unique leg $L_1$ of $\mathcal L_x^+$ without a detector are distinguished. Again, any listener in $V_x$ separates the same set of pairs of $V(G)\setminus V_x$. Hence, if some vertex is not distinguished by the new detection pair, it must be a vertex $u$ of $L_1$ that is not separated from a vertex $v$ of $V(G)\setminus V_x$. This implies that $d(u,x)=d(v,x)$ (otherwise $u$ and $v$ would be separated by any of the listeners placed in $V_x$). We know that $u$ and $v$ were separated by some detector of $(W,L)$, which must have been a detector in the leg $L_1$ (no detector in another leg of $\mathcal{L}_x^+$ could possibly separate $u$ and $v$ since $d(u,x)=d(v,x)$). But we know that there was one leg $L_2$ of $\mathcal L_x^+$ that contained no detector of $(W,L)$. Since $L_1$ is a shortest leg of $\mathcal{L}_x^+$, there is a vertex $u'$ in $L_2$ with $d(x,u)=d(x,u')$. But then, $u'$ and $v$ were not distinguished by $(W,L)$, a contradiction. Hence we have a valid detection pair whose size is not greater than $||(W,L)||$ and we are in Case~2(b) or~2(d), and we are done.
\end{proof}

The next two lemmas are used in our algorithm to choose the placement of the listeners.

\begin{lemma}\label{lemm:between2listeners}
Let $G$ be a graph with two vertices $u$ and $v$ connected by a unique path $P$. Let $V(P^+)$ be the set of vertices of $P$ together with the vertices of each leg attached to an inner-vertex of degree~$3$ of $P$.
Then, by placing listeners on $u$ and on $v$, all vertices of $V(P^+)$ are distinguished (by $u$ and $v$).
\end{lemma}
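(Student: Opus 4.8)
The plan is to introduce coordinates along $P$ and along each leg, reduce the statement to an injectivity claim about the pair of distances to $u$ and $v$, and prove that claim by first pinning down all the relevant distances explicitly. I would write $P = p_0 p_1 \cdots p_m$ with $p_0 = u$ and $p_m = v$, so that each inner vertex is some $p_i$ with $1 \le i \le m-1$. For an inner vertex $p_i$ of degree~$3$, I would denote the attached leg by $p_i, q^i_1, \ldots, q^i_{s_i}$, where $q^i_j$ sits at leg-distance $j$ from $p_i$ and $q^i_{s_i}$ is the leaf. Thus every vertex $a$ of $V(P^+)$ carries a pair of parameters $(i,j)$: its \emph{attachment index} $i$ (the index on $P$ from which it hangs, with $a = p_i$ itself corresponding to $j=0$) and its \emph{leg-depth} $j \ge 0$. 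The goal then becomes showing that the map $a \mapsto (d(u,a), d(v,a))$ is injective on $V(P^+)$, since injectivity is exactly the statement that every pair of distinct vertices of $V(P^+)$ is separated by $u$ or by $v$.

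The technical heart is to compute these distances. First I would observe that, because $P$ is the unique $u$--$v$ path and every shortest path is simple, $P$ is itself a shortest path and $d(u,v) = m$. For a path vertex $p_i$, combining $d(u,p_i) \le i$, $d(p_i,v) \le m-i$, and the triangle inequality $d(u,p_i) + d(p_i,v) \ge d(u,v) = m$ forces $d(u,p_i) = i$ and $d(v,p_i) = m-i$. For a leg vertex $q^i_j$, I would use the structure of a leg (its inner vertices have degree~$2$ and it ends in a leaf), which makes $p_i$ a cut vertex separating the whole leg from the rest of $G$, and in particular from $u$ and $v$; hence every $u$--$q^i_j$ path passes through $p_i$, giving $d(u,q^i_j) = d(u,p_i) + j = i+j$, and likewise $d(v,q^i_j) = (m-i) + j$. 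These formulas also subsume the path vertices as the case $j=0$.

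I expect this distance-computation step to be the main obstacle, since it is exactly where the two hypotheses (uniqueness of $P$, and the pendant-path nature of legs) must be invoked with care; once the distances are in hand, the remainder is arithmetic. Indeed, I would read off injectivity from the two combinations $d(u,a) - d(v,a) = 2i - m$ and $d(u,a) + d(v,a) = m + 2j$: the difference recovers the attachment index $i$ and the sum recovers the leg-depth $j$, so the pair $(d(u,a), d(v,a))$ determines $(i,j)$ and hence $a$. Concretely, given two distinct vertices $a,b \in V(P^+)$ with parameters $(i,j)$ and $(i',j')$, if $i \neq i'$ then they differ in $d(u,\cdot) - d(v,\cdot)$ and are separated, while if $i = i'$ then necessarily $j \neq j'$ and they differ in $d(u,\cdot) + d(v,\cdot)$, again separated. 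This establishes the injectivity and completes the proof.
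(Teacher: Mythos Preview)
Your reduction to injectivity on $V(P^+)$ is not what the lemma asserts. Recall the paper's definition: a vertex $w$ is \emph{distinguished} by $(W,L)$ if, for every other vertex $z$ of the whole graph $G$, either $z$ is dominated by a watcher or $w$ and $z$ are separated by some listener. Here there are no watchers, so you must show that for every $w\in V(P^+)$ and every $z\in V(G)\setminus\{w\}$, the pair $(d(u,\cdot),d(v,\cdot))$ separates $w$ from $z$. You only prove that the map $a\mapsto (d(u,a),d(v,a))$ is injective \emph{within} $V(P^+)$; this leaves open the possibility that some $z\notin V(P^+)$ satisfies $d(u,z)=d(u,w)$ and $d(v,z)=d(v,w)$, which would mean $w$ is not distinguished. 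The paper's own proof explicitly takes $z$ to be an arbitrary vertex of $G$ for exactly this reason.

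Your coordinate setup and distance formulas for vertices of $V(P^+)$ are fine, and the difference/sum trick recovering $(i,j)$ is the same mechanism the paper uses via $\delta_{u,v}$. What is missing is the argument ruling out an external collision. The paper handles it as follows: given $z$ with the same distances to $u$ and $v$ as $w$, one first excludes $z\in V(P)$; then, since $|\delta_{u,v}(w)|<d(u,v)$, one excludes the case where every shortest $z$--$u$ path passes through $v$ (or vice versa); hence some shortest paths from $z$ to $u$ and to $v$ enter $P$ at a common inner vertex $s$, forcing $\delta_{u,v}(z)=\delta_{u,v}(s)$ and therefore $s=p_i$. Now the degree-$3$ hypothesis is crucial: because $p_i$ has degree exactly~$3$, the only vertices whose shortest paths to both $u$ and $v$ pass through $p_i$ (and which are not on $P$) are those on the single leg at $p_i$, so $d(z,p_i)=d(w,p_i)$ forces $z=w$. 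Your argument never uses the degree-$3$ restriction, which is a sign that it cannot be complete: without that restriction, an inner vertex of $P$ could carry another branch, and a vertex $z$ on that branch at the same depth as $w$ would be a genuine counterexample to the claim.
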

\begin{proof}
For every vertex $x$ of $G$, we let $\delta_{u,v}(x)=d(u,x)-d(v,x)$. Note that if for two vertices $x$ and $y$, $\delta_{u,v}(x)\neq \delta_{u,v}(y)$, then $x$ and $y$ are separated by $u$ and $v$.

Clearly, $u$ and $v$ are distinguished. Let $w$ be a vertex of $V(P^+)\setminus\{u,v\}$, and assume for a contradiction that there is some vertex $z$ with $d(u,w)=d(u,z)$ and $d(v,w)=d(v,z)$. In particular, $\delta_{u,v}(w)=\delta_{u,v}(z)$.

Note that for every two vertices $p$, $q$ of $P$, we have $\delta_{u,v}(p)\neq \delta_{u,v}(q)$, therefore
at most one of the vertices $w$ and $z$ can belong to $P$. We assume w.l.o.g.~that $z$ does not belong to $P$.

The shortest paths from $z$ to both $u$ and $v$ must go through an inner-vertex of $P$, say $s$. Then, we have $\delta_{u,v}(z)=\delta_{u,v}(s)=\delta_{u,v}(w)$, which implies that all shortest paths from $w$ to $u$ and $v$ also go through $s$. But then, we have $d(w,s)=d(z,s)$, a contradiction because by the structure of $V(P^+)$ this implies that $w=z$.
\end{proof}

\begin{lemma}\label{lemm:between2listeners-coro}
Let $G$ be a graph with two vertices $u$ and $v$ connected by a unique path $P$, such that each inner-vertex of $P$ has either degree~$2$ or a single leg attached. Moreover let $V(P^+)$ be the set of inner-vertices of $P$ together with the vertices of each leg attached to an inner-vertex of degree~$3$ of $P$.
Then, for every detection pair $(W,L)$ of $G$, one can obtain a valid detection pair of $G$ by replacing all detectors of $(W\cup L)\cap V(P^+)$ by two listeners at $u$ and $v$.
\end{lemma}
\begin{proof}
By Lemma~\ref{lemm:between2listeners}, all vertices in $V(P^+)$ are distinguished by $u$ and $v$. Note that no watcher of $V(P^+)$ could possibly distinguish any vertex outside of $V(P^+)\cup\{u,v\}$ (and $u$, $v$ are clearly distinguished). Furthermore, since $P$ is the unique path connecting $u$ and $v$, for any listener $l\in L$ in $V(P^+)$, the set of pairs in $V(G)\setminus V(P^+)$ separated by $l$ is the same as the set of pairs separated by $u$ (and by $v$). This completes the proof.
\end{proof}

The next lemma will be used in the algorithm to determine the placement of watchers in parts of the tree where there is no listener.

\begin{lemma}\label{lemm:subtree-no-listener}
Let $T$ be a tree with a vertex $x$ and an optimal detection pair $(W,L)$ of $T$ with $|L|\geq 1$. Let $T_x$ be a subtree of $T$ containing $x$ and assume that (i) $T_x$ contains no $t$-stems for any $t\geq 2$, (ii) for each listener $l$ of $L$ and each vertex $v$ of $T_x$, the shortest path from $l$ to $v$ goes through $x$ (in particular $T_x-x$ contains no listener) and (iii) $T_x$ is maximal with respect to~(i) and~(ii). Then, there is a set $X$ of size at most $63|W|^2$ with $W\cap V(T_x)\subseteq X$. Moreover, $X$ can be determined in time linear in $|V(T_x)|$.
\end{lemma}
\begin{proof}
We may view $T_x$ as a rooted tree whose root is $x$. Consider the partition $\mathcal P$ of $V(T_x)$ into parts $P_0, P_1, \ldots, P_d$ where $P_i$ contains the vertices of $V(T_x)$ that have distance~$i$ to $x$. Since $|L|\geq 1$ and by~(ii) all shortest paths from any listener of $L$ to a vertex of $T_x$ go through $x$, the set $L$ of listeners separates a pair $u,v$ of $V(T_x)$ if and only if $u$ and $v$ belong to different parts of $\mathcal P$. Therefore, since $(W,L)$ is a detection pair of $G$, for any two distinct vertices $u$ and $v$ from the same part $P_i$, at least one of $u,v$ is dominated by a watcher of $W$.

Let $\mathcal S_{2^+}$ be the set of parts of $\mathcal P$ of size at least~$2$ (that is, those parts where at least one vertex must be dominated by a watcher). Let $\widetilde{F_x}$ be the forest induced by the set of vertices of $T_x$ that belong to some part of $\mathcal S_{2^+}$, together with their neighbours in $T_x$.
We let $X=V(\widetilde{F_x})$, and we shall prove that $X$ satisfies the claim. First, we show that $(W\cap V(T_x))\subseteq X$. Assume by contradiction that this is not the case, and let $w$ be a watcher of $W$ that belongs to $V(T_x)\setminus X$. By the optimality of $(W,L)$, $(W\setminus\{w\},L)$ is not a detection pair of $T$. Therefore, there must be some vertex $u$ dominated by $w$ that is not separated by $L$ from some other vertex $v$ (and $v$ is not dominated by a watcher of $W\setminus\{w\}$). In other words, for every listener $l\in L$, we have $d_T(u,l)=d_T(v,l)$. Thus, since $T$ is a tree, all shortest paths from $l$ to both $u$ and $v$ have a common part (say from $l$ until some vertex $y$) and then are disjoint. By~(iii), $T_x$ is maximal with respect to~(i) and~(ii), which implies that $y=x$ and both $u$ and $v$ belong to $T_x$. However, observe that $u$ belongs to a part of $\mathcal S_{2^+}$ (otherwise $w$ would belong to $X$). Therefore, $u$ is uniquely determined (within $T_x$) by its distance to $x$. Since $|L|\geq 1$ and by~(ii), this implies that $u$ and $v$ are in fact distinguished by $L$, a contradiction.

Thus, we have shown that $W\cap V(T_x)\subseteq X$. It is moreover clear that $X$ can be computed in time linear in $|V(T_x)|$ by using the distances of the vertices of $T_x$ to $x$. Therefore, it remains only to bound the size of $X$.

First, we show that $\widetilde{F_x}$ contains vertices of at most $9|W|$ distinct parts of $\mathcal P$. To each part $P_i$ of $\mathcal S_{2^+}$, we can associate the two parts $P_{i-1}$ and $P_{i+1}$, that may have vertices in $\widetilde{F_x}$ but that may not be in $\mathcal S_{2^+}$. Hence, the total number of parts with a vertex in $\widetilde{F_x}$ is at most $3|\mathcal S_{2^+}|$. Since a watcher may dominate vertices from at most three parts of $\mathcal S_{2^+}$ and each part in $\mathcal S_{2^+}$ has at least one vertex dominated by a watcher, we have $|\mathcal S_{2^+}|\leq 3|W|$, hence $\widetilde{F_x}$ contains vertices of at most $9|W|$ distinct parts of $\mathcal P$.

Second, we prove that $\widetilde{F_x}$ has at most $|W|+2|\mathcal S_{2^+}|\leq 7|W|$ leaves. Note that a leaf of $\widetilde{F_x}$ is either an actual leaf in $T$, or it is a vertex of a part $P_i$ with $|P_i|=1$ (or both). We claim that $\widetilde{F_x}$ can contain at most $|W|$ leaves of $T$ belonging to a part of $\mathcal S_{2^+}$. Indeed, each such leaf must be dominated by a watcher, but it can be dominated only by itself or by its parent in $T_x$. But since by~(i) there are no $t$-stems for $t\geq 2$ in $T_x$, no parent can dominate more than one leaf, which proves the bound. Furthermore, $\widetilde{F_x}$ contains vertices from at most $2|\mathcal S_{2^+}|$ parts of $\mathcal P$ of size~$1$, hence the number of leaves of $\widetilde{F_x}$ that are not in a part of $|\mathcal S_{2^+}|$ is at most $2|\mathcal S_{2^+}|\leq 6|W|$ since $|\mathcal S_{2^+}|\leq 3|W|$. Hence in total, we have at most $7|W|$ leaves in $\widetilde{F_x}$.

But any rooted forest with at most $9|W|$ levels and with at most $7|W|$ leaves may have at most $63|W|^2$ vertices, hence $X=V(\widetilde{F_x})$ has size at most $63|W|^2$. This completes the proof.
\end{proof}

\subsection{The algorithm (proof of Theorem~\ref{thm:fpt})}

We now describe the claimed FPT algorithm for \DPdec on trees and prove its correctness and running time.

\setcounter{theorem}{9}

\begin{proof}[Proof of Theorem~\ref{thm:fpt}]
  Let us describe the algorithm, that seeks to build a detection pair $(W,L)$ with at most $k$ detectors of an input tree $T$ of order $n$.
  
\medskip

\noindent\textbf{Preliminary phase: Searching for a solution with \boldmath{$|L|\leq 1$}.} A first step of the algorithm is to check whether there is a solution with $|L|\leq 1$. If we set $L=\emptyset$, the problem is equivalent to finding a dominating set of $T$ of size at most $k$. This can be done in linear time~\cite{CGH75}. If we set $|L|=1$, one may try the $n$ possibilities for the placement of the unique listener. For each possibility (say we have placed the listener at vertex $v$), we create a subtree $T_v$ of $T$ rooted at $v$ as follows. First of all, let $T_v=T$. Then, for each $t$-stem $x$ of $T$ with $t\geq 2$ (except if $v$ is a leaf attached to $x$ and $x$ is a $2$-stem), we may assume that $x$ belongs to $W$ (by a similar argument as in the proof of Lemma~\ref{lemma:SBP}, Cases~1a and~2a.). Thus, we add $x$ to $W$ and we we remove from $T_v$ all leaves but one that are attached to $x$ (if $v$ is one such leaf, we keep $v$ in $T_v$). Now $T_v$ has no $t$-stems for $t\geq 2$ and we can apply Lemma~\ref{lemm:subtree-no-listener}. Thus, there is a set $X$ of at most $63k^2$ vertices that may contain a watcher ($X$ is computable in linear time). Then, it suffices to try all the possibilities of selecting $k-|W|-1$ additional watchers from this set. There are at most ${63k^2 \choose k}\in O\left(2^{O(k\log k)}\right)$ such possibilities. For each of them, we check in linear time whether we have a valid detection pair. In total this phase takes $2^{O(k\log k)}n^2$ time. If we find a valid solution in this phase, we return YES.

\medskip

Next, we assume that $|L|\geq 2$ and proceed in three phases.

\medskip

\noindent\textbf{Phase 1: Handling the special branching points.} We now do a preprocessing step using Lemma~\ref{lemma:SBP}. First compute (in linear time) the set of special branching points of $T$, and let $B_{2^+}$ be the set of those special branching points that have at least two legs attached. Note that by Observation~\ref{obs:specialBP}, if $|B_{2^+}| > k$ we can return NO. Hence, assume that $|B_{2^+}| \leq k$. Now, for each special branching point $x$ of $B_{2^+}$, by Lemma~\ref{lemma:SBP} we can assume that there are at most four different choices for the set of detectors on $x$ and the vertices belonging to a leg attached to $x$. Therefore, we may go through each of the possible combinations of these choices; there are at most $4^k$ of them. Of course we discard the choices for which there are more than $k$ detectors. For a combination $C_i$, Let $(W_i,L_i)$ be the partial detection pair corresponding to $C_i$. In the remainder, we show how to decide whether there is a detection pair $(W,L)$ of $T$ with $W_i\subseteq W$, $L_i\subseteq L$ and $||(W,L)||\leq k$.

\medskip

\noindent\textbf{Phase 2: determining the set of listeners.} Let $V_i^*$ be the set of vertices of $T$ containing, for each pair $u,v$ of listeners of $L_i$, all the vertices of the path $P_{uv}$ connecting $u$ and $v$, as well as all the vertices of each leg of $T$ attached to a degree~$3$ inner-vertex of $P_{uv}$. By Lemma~\ref{lemm:between2listeners}, all the vertices in $V_i^*$ are distinguished by $(W_i,L_i)$. Clearly, the subgraph $T_i^*=T[V_i^*]$ of $T$ is a connected subtree. Let $X=\{x_1,\ldots,x_p\}$ be the set of vertices in $V_i^*$ that have a neighbour in $T-T_i^*$. For each vertex $x_j$ of $X$, we let $T_{x_j}$ be the subtree of $T$ formed by $x_j$ together with all trees of $T-T_i^*$ containing a neighbour of $x_j$. Let $\mathcal F$ be the forest consisting of all trees $T_{x_j}$ ($1\leq j\leq p$) such that at least one vertex of $T_{x_j}$ is not distinguished by $(W_i,L_i)$. Next, we upper-bound the size of $\mathcal F$.

\begin{claim}\label{clm:mathcal F}
If there is a detection pair $(W,L)$ of $T$ with $W_i\subseteq W$, $L_i\subseteq L$ and $||(W,L)||\leq k$, then $|\mathcal F|\leq k$.
\end{claim}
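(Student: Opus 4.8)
The plan is to prove the contrapositive-style bound by an \emph{injective charging argument}: I will charge each tree of the forest $\mathcal F$ to a distinct detector of the hypothetical solution $(W,L)$, reading $|\mathcal F|$ as the number of trees $T_{x_j}$ composing $\mathcal F$. First I would record that the trees $T_{x_1},\dots,T_{x_p}$ are pairwise vertex-disjoint: distinct boundary vertices $x_j\neq x_{j'}$ both lie in $V_i^*$, and every component of $T-T_i^*$ attaches to exactly one vertex of $V_i^*$, so no vertex is shared. Granting disjointness, it suffices to show that every tree $T_{x_j}$ \emph{belonging to} $\mathcal F$ contains at least one detector of $(W,L)$; since $(W,L)$ has at most $k$ detectors in total, this immediately yields $|\mathcal F|\leq k$.

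To carry out the charging, I would fix a tree $T_{x_j}\in\mathcal F$ together with a vertex $w\in V(T_{x_j})\setminus\{x_j\}$ that is not distinguished by $(W_i,L_i)$ (note $x_j\in V_i^*$ is already distinguished by Lemma~\ref{lemm:between2listeners}, so $w\neq x_j$), and take the witness $v_0$ that is not dominated by $W_i$ and not separated from $w$ by any listener of $L_i$. Then I would assume for contradiction that $(W,L)$ places \emph{no} detector inside $V(T_{x_j})$, and aim to exhibit a vertex that $(W,L)$ fails to distinguish. The decisive step, and the one I expect to be the main obstacle, is to prove that $v_0$ actually lies in $V(T_{x_j})\setminus\{x_j\}$ with $d(x_j,v_0)=d(x_j,w)$. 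For this I would first observe that every boundary vertex $x_j\in X$ lies on some path $P_{uv}$ between two listeners $u,v\in L_i$ (leg vertices of $V_i^*$ have no neighbour outside $T_i^*$, so they never enter $X$), and that every shortest path from an interior vertex of $T_{x_j}$ to a listener of $L_i$ passes through $x_j$. Writing $c=d(x_j,w)\geq 1$, the condition ``$v_0$ not separated from $w$ by $L_i$'' becomes $d(\ell,v_0)=d(\ell,x_j)+c$ for every $\ell\in L_i$; in a tree this forces $x_j$ to lie on every path from $v_0$ to a listener. If $v_0$ were in $V_i^*$, the uniqueness part of Lemma~\ref{lemm:between2listeners} applied to $u,v$ would make $v_0$ the unique vertex with its pair of distances to $u$ and $v$, forcing $v_0=w$, which is impossible; hence $v_0\notin V_i^*$, and combined with the separation property $v_0$ must sit in a component of $T-x_j$ hanging off $x_j$, that is, $v_0\in V(T_{x_j})\setminus\{x_j\}$ with $d(x_j,v_0)=c$.

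The conclusion is then routine. Because $(W,L)$ has no detector in $V(T_{x_j})$ and the closed neighbourhoods of $w$ and $v_0$ are both contained in $V(T_{x_j})$, neither vertex is dominated by $W$; and since every listener of $L$ lies outside $T_{x_j}$, each such listener $\ell$ reaches $w$ and $v_0$ only through $x_j$, giving $d(\ell,w)=d(\ell,x_j)+c=d(\ell,v_0)$. Thus $w$ is neither dominated nor separated from the undominated vertex $v_0$, so $(W,L)$ does not distinguish $w$, contradicting that $(W,L)$ is a detection pair. This forces a detector of $(W,L)$ into each tree of $\mathcal F$, and the disjointness from the first paragraph delivers $|\mathcal F|\leq k$. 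The only genuinely delicate point is the localisation of $v_0$ inside $T_{x_j}$ via the uniqueness in Lemma~\ref{lemm:between2listeners}; everything else is a short distance computation using that $T_{x_j}$ attaches to the rest of $T$ only through $x_j$.
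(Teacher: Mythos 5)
Your proof is correct and takes essentially the same approach as the paper's: show that each tree of $\mathcal F$ must contain a detector of $(W,L)$ (so that pairwise disjointness of the trees gives $|\mathcal F|\leq k$), by assuming a detector-free tree $T_{x_j}$, localizing the undistinguished vertex $w$ together with its witness $v_0$ inside $T_{x_j}$ itself, and observing that detectors outside $T_{x_j}$ can neither dominate nor separate them since all paths from outside reach them through $x_j$. The only difference is cosmetic, in how the witness is localized: the paper exhibits a single listener pair whose path contains both attachment points and uses monotonicity of the distance difference along that path, whereas you first derive $d(x_j,v_0)=d(x_j,w)$ from the listener pair through $x_j$ and then force the attachment point of $v_0$'s component to coincide with $x_j$.
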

\claimproof We prove that each tree of $\mathcal F$ must contain a detector of $(W,L)$. Assume for contradiction that there is a tree $T_{x_a}$ of $\mathcal F$ such that no detector of $(W,L)$ belongs to $T_{x_a}$. By definition, $T_{x_a}$ contains a vertex $w$ not distinguished by $(W_i,L_i)$, hence there is a vertex $w'$ of $T-T_i^*$ (say, $w'\in T_{x_b}$) not separated by any listener of $L_i$ and none of $w$, $w'$ is dominated by a watcher of $W_i$. Then, $w'$ must belong to $T_{x_a}$ (hence $a=b$), indeed by definition of $T_i^*$ there are two listeners $u$ and $v$ of $L_i$ whose path contains $x_a$ and $x_b$. But then, if $a\neq b$, clearly $u$ and $v$ would separate $w$ and $w'$, a contradiction. But since $w$ and $w'$ both belong to $T_{x_a}$, no listener outside of $T_{x_a}$ can separate them, and clearly no watcher outside of $T_{x_a}$ can dominate them, which proves the claim.\smallqed

\medskip

By Claim~\ref{clm:mathcal F}, if we have $|\mathcal F|>k$, we can discard the combination $C_i$; hence, we assume that $|\mathcal F|\leq k$.

Now, from each tree $T_{x_j}$ of $\mathcal F$, we build the subtree $\widehat{T_{x_j}}$ of $T_{x_j}$ as follows: (i) remove all legs in $T_{x_j}$ whose closest special branching point has only one leg attached; (ii) remove each leaf of $T_{x_j}$ that is adjacent to a watcher of $W_i$. We also denote by $\widehat{\mathcal F}$ the forest containing each tree $\widehat{T_{x_a}}$ with $T_{x_a}\in\mathcal F$. We next claim the following.

\begin{claim}\label{clm:widehat(T_{x_a})}
If there is a detection pair $(W,L)$ of $T$ with $W_i\subseteq W$, $L_i\subseteq L$ and $k_a$ detectors on $T_{x_a}$, then $\widehat{T_{x_a}}$ has at most $2k_a$ leaves.
\end{claim}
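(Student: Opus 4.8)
The plan is to prove the bound by a charging argument: I will assign each leaf of $\widehat{T_{x_a}}$ to a detector of $(W,L)$ lying in $T_{x_a}$ in such a way that every detector receives at most two charges. Since $(W,L)$ places exactly $k_j$ detectors on $T_{x_a}$, this immediately yields at most $2k_j$ leaves. The whole difficulty is therefore concentrated in understanding the shape of $\widehat{T_{x_a}}$ well enough to set up the charging.

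First I would establish a structural normal form for $\widehat{T_{x_a}}$: \emph{every} leaf of $\widehat{T_{x_a}}$ is the leaf of a leg, and, with one controlled exception, this leg is attached to a special branching point $b$ of $T$ carrying at least two legs. The exception is a single long leg produced when a pure $t$-stem watcher fixed in $W_i$ has its $t$ leaf-neighbours deleted by rule~(ii); I charge the leaf of such a leg directly to that watcher, which lies in $W_i\subseteq W$ and in $T_{x_a}$ (one charge). The root $x_a$ is handled separately: if $x_a$ retains at least two legs in $\widehat{T_{x_a}}$ then it is itself a special branching point of $T$ with exactly those legs (its pendant legs lie in $T-T_i^*$ and so keep their degrees in $T$), while a single surviving leg at $x_a$ is removed by rule~(i). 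The justification that no \emph{other} leaf type occurs rests on Lemma~\ref{lemma:SBP} and the combination $C_i$: any special branching point of $B_{2^+}$ on which $C_i$ places a listener forces $x_a$-to-that-listener paths through $b$, so (as $|L_i|\geq 2$) such a $b$ lies in $T_i^*$ and is absent from $T_{x_a}-x_a$; the remaining $B_{2^+}$ points surviving in $T_{x_a}$ are pure $t$-stem watchers, handled by rule~(ii) above.

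Second, at each special branching point $b$ with $\ell_b\geq 2$ legs surviving in $\widehat{T_{x_a}}$ I apply Lemma~\ref{lem:legs}. The role of rule~(ii) is precisely to make the surviving legs coincide with those for which Lemma~\ref{lem:legs} guarantees a detector: if $b$ carries no watcher of $(W,L)$, then all $\ell_b$ legs survive and at least $\ell_b-1$ of them contain a detector; if $b$ carries a watcher, its leaf-neighbours are deleted, the surviving legs are exactly the legs of length at least $2$, and Lemma~\ref{lem:legs} again yields a detector in all but one of them. I then charge the leaf of each detector-containing leg to a detector inside that leg (one charge each), and charge the unique remaining detector-free leaf of $b$, if present, to a detector in a sibling leg (a second charge on that detector). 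Hence each detector associated with $b$ receives at most two charges, detectors in distinct legs of $b$ are charged at most twice, and detectors in legs of distinct branching points are never reused since each leg belongs to a unique branching point.

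The main obstacle is the first step. The two pruning rules genuinely restructure $T_{x_a}$, so I must verify carefully that pruning can never create a leaf sitting at a branching point that retains only a single leg (such a leaf would escape both Lemma~\ref{lem:legs} and the watcher charge), and that every special branching point of $T$ with at least two legs surviving in $\widehat{T_{x_a}}$ is either a fixed $t$-stem watcher in $W_i$ or lies in $T_i^*$. Once this normal form is secured, aligning the watcher/no-watcher dichotomy of Lemma~\ref{lem:legs} with the legs left behind by rule~(ii) makes the factor-two bookkeeping routine, and summing over all branching points together with the $x_a$ and long-leg cases gives at most $2k_j$ leaves.
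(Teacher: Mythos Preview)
Your charging argument is the paper's proof recast in charging language: both rest on (a) rule~(i) forcing every surviving detector-free leg of $\widehat{T_{x_a}}$ to sit at a branching point carrying at least two legs, (b) rule~(ii) ensuring that any newly created leg contains a watcher of $W_i$, and (c) a Lemma~\ref{lem:legs}-style argument leaving at most one detector-free leg per such branching point, which gives the factor~$2$. The one substantive difference is your first step: the paper argues entirely locally inside $\widehat{T_{x_a}}$ and never invokes $T_i^*$, whereas you route through the global structure of $T_i^*$ and the hypothesis $|L_i|\geq 2$ in order to classify the surviving $B_{2^+}$ points as ``pure $t$-stem watchers''. That detour is unnecessary---your own steps~2 and~3 already deliver the bound once (a)--(c) are secured---and the hypothesis $|L_i|\geq 2$ is not guaranteed by the algorithm (every point of $B_{2^+}$ could fall under Case~1(a) of Lemma~\ref{lemma:SBP}, leaving $L_i$ empty), so the normal-form step as written is both superfluous and not fully justified.
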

\claimproof Let us consider the legs of $\widehat{T_{x_a}}$ containing no detector of $(W_i,L_i)$. By~(i) of the construction of $\widehat{T_{x_a}}$, such a leg must be attached to a special branching point having at least two legs attached. (Note that operation (i) did not create any new special branching point.) Moreover, by~(ii), if this leg has length~$1$ in $\widehat{T_{x_a}}$, its unique vertex will not be dominated by a watcher, since the set of detectors on this leg and its special branching point are determined by the combination $C_i$. Note that in~(ii), if we create a new leg, this leg contains a detector (a watcher).

Let $v$ be a special branching point of $\widehat{T_{x_a}}$ and let $\mathcal L_v$ be the set of legs in $\widehat{T_{x_a}}$ attached to $v$ that do not contain any detector of $(W_i,L_i)$. By the previous discussion, we have $|\mathcal L_v|\geq 2$. If some leg of $\mathcal L_v$ has length~$1$, as we observed there will be no watcher on $v$. Hence, in order to separate all the neighbours of $v$ on the legs of $\mathcal L_v$, we need a detector in at least $|\mathcal L_v|-1\geq |\mathcal L_v|/2$ of these legs. Otherwise, we still need that many detectors in order to separate all the vertices at distance~$2$ of $v$ on the legs of $\mathcal L_v$. Hence, at most half of the leaves of $\widehat{T_{x_a}}$ belong to a leg with no detector, which proves the claim.\smallqed

\medskip

By Claim~\ref{clm:widehat(T_{x_a})}, $\widehat{T_{x_j}}$ has at most $2k_j$ leaves (where $k_j$ is the set of detectors placed on $T_{x_j}$). Therefore, if the total number of leaves in the forest $\widehat{\mathcal F}$ is more than $2k$, we can discard the current combination $C_i$. Hence, we can assume that the total number of leaves in $\widehat{\mathcal F}$ is at most $2k$.

Therefore, the total number of vertices of degree at least~$3$ in $\widehat{\mathcal F}$ is at most $2k$. Let us consider the sets of maximal threads in $\widehat{\mathcal F}$ (that is, paths with inner-vertices of degree~$2$). The endpoints of such threads are vertices of degree~$1$ or at least~$3$ in $\widehat{\mathcal F}$, and therefore the number of such threads is at most the number of these vertices, which is upper-bounded by $4k$.

Each such thread $P_{uv}$ between two vertices $u$ and $v$ of a tree $\widehat{T_{x_j}}$ of $\widehat{\mathcal F}$, corresponds (in $T_{x_j}$) to a path whose inner-vertices either have a single leg attached, or contain a watcher of $W_i$ with a set of leaves attached. Hence, we may assume that $P_{uv}$ contains at most two listeners: if it contained more than two, by Lemma~\ref{lemm:between2listeners-coro}, we could replace them with the two endpoints $u$ and $v$ of $P_{uv}$. Moreover, if $P_{uv}$ contained exactly one listener, similarly we may assume it is placed on one of the endpoints of $P_{uv}$. Indeed, since $|L|\geq 2$, there will be another listener in the solution set. If this listener is closer to $u$, similarly as in Lemma~\ref{lemm:between2listeners}, a listener on $v$ would distinguish all vertices on $V(P_{uv}^+)$ (where $V(P_{uv}^+)$ contains all vertices of $P_{uv}$ and the vertices of legs attached to inner-vertices of $P_{uv}$). Moreover, the set of pairs of vertices outside of $V(P_{uv}^+)$ separated by $u$ and by $v$ are the same. Note that by Lemma~\ref{lem:unique-leg}, if we choose to place a listener on an endpoint $u$ of a path $P_{uv}$ that is a thread in $\widehat{T_{x_j}}$ and this endpoint has a leg attached, instead of placing the listener on $u$ we can instead place the listener on the leaf of this leg.

Therefore, for each thread $P_{uv}$ of $\widehat{T_{x_a}}$, there are four possibilities for the placement of the listeners: a listener on both $u$ and $v$ (or the leaf of a corresponding leg attached to $u$ or $v$); a single listener at $u$; a single listener at $v$; no listener. There are therefore at most $4^{4k}$ possibilities to guess the placement of listeners on tree $T_{x_a}$. For each such placement $P_j$, we obtain a set $L_j$ of listeners (of course, if $||(W,L_i\cup L_j)||>k$ we do not consider this placement). Now, we can assume that $L=L_i\cup L_j$, that is, that there will be no more listeners in the sought solution $(W,L)$. Hence, it remains to check whether we can obtain a valid solution of size~$k$ by only adding watchers.

\medskip

\noindent\textbf{Phase 3: Determining the set of watchers.} We compute a tree $T_{i,j}^*$ similarly as $T_i^*$ but using the new set $L_i\cup L_j$ of listeners. Similarly as before, we define the set $X'$ of vertices of $T_{i,j}^*$ having a neighbour in $T-T_{i,j}^*$, and the set of trees $T'_{x_j}$. We also let $\mathcal{F'}$ be the forest containing those trees $T'_{x_j}$ that have at least one vertex not distinguished by $(W_i,L_i\cup L_j)$. We know that for each tree $T'_{x_j}$ of $\mathcal{F'}$, we will use only watchers to distinguish the vertices not yet distinguished. For each tree $T'_{x_j}$ in $\mathcal{F'}$, let $\widetilde{T'_{x_j}}$ be the tree obtained from $T'_{x_j}$ by removing all leaves already dominated by a watcher of $W_i$. If this process has created a $t$-stem with $t\geq 2$, then at least $t-1$ leaf neighbours of this $t$-stem have a watcher of $W_i$; simply remove them. Then, the obtained tree $\widetilde{T'_{x_j}}$ has no $t$-stems with $t\geq 2$. Hence, we can apply Lemma~\ref{lemm:subtree-no-listener} to $\widetilde{T'_{x_j}}$ (if necessary, we augment $\widetilde{T'_{x_j}}$ to satisfy the maximality condition~(iii) in Lemma~\ref{lemm:subtree-no-listener}). Thus, there is a set $X_j$ of at most $63k^2$ vertices of $\widetilde{T'_{x_j}}$ that may possibly contain a watcher, and $X_j$ can be computed in time linear in $|V(\widetilde{T'_{x_j}})|$. As before, there are at most $k$ trees in $\mathcal{F'}$ and hence in total the set $X=\cup_j X_j$ of possible placements for the remaining watchers has size at most $63k^3$ (and is computable in linear time). Therefore, we can check all the ${63k^3 \choose k}\in O\left(2^{O(k\log k)}\right)$ possibilities of placing the remaining watchers. If we find a valid detection pair $(W,L)$ with $W_i\subseteq W$, we return YES; otherwise, we discard the placement $P_j$ and move on to the next possibility.


\medskip

It is clear, by the lemmas used in the description in the algorithm, that this algorithm is correct. The running time of the preliminary phase is $2^{O(k\log k)}n^2$. The running time of Phases~1 to~3 is at most $4^k4^{4k}2^{O(k\log k)}n$ which is $2^{O(k\log k)}n$. This completes the proof.
\end{proof}

\section{Conclusion}\label{sec:conclu}

We have obtained a linear-time $2$-approximation algorithm for \DPopt on trees, but perhaps this algorithmic upper bound could be improved to a PTAS? It seems that the reduction of~\cite{FHY15} does not show the inapproximability of \DPopt for any constant greater than one. Therefore, it remains to settle the exact approximation complexity of \DPopt on trees.

As a second question, can the factor $2^{O(k\log k)}$ in our FPT algorithm for \DPdec be improved to single-exponential FPT time $2^{O(k)}$ (or even sub-exponential time $2^{o(k)}$)? For this, the bottleneck is our use of Lemma~\ref{lemm:subtree-no-listener} in the preliminary phase and in Phase~3 of the algorithm, during which we search through all possible subsets of size~$k$ of some sets of size $O(k^2)$ and $O(k^3)$, respectively. Moreover, it is perhaps possible to obtain a linear running-time (in terms of $n$) by reducing the running-time of the preliminary phase.

As an extension of our results, it would be of interest to determine whether, on further graph classes, \DPopt is constant-factor approximable and whether \DPdec is FPT. One natural research direction is to consider the class of planar graphs or its subclasses, the outerplanar graphs and the series-parallel graphs (that is, graphs of treewidth~$2$). These three classes contain all trees. Although both questions are settled in the affirmative for \DSopt and \DSdec (see~\cite{dom:advancedtopics} and~\cite{DF13,N06}), as far as we know they also remain open for \MDopt and \MDdec (though, a polynomial-time algorithm exists for \MDopt on outerplanar graphs~\cite{DPSV12}).

\medskip

\paragraph{Acknowledgements.} We thank the anonymous referees for their careful reading and valuable comments. We wish to thank Bert L. Hartnell for sharing the manuscript~\cite{FHY15} with us. Moreover we acknowledge the financial support of the programme ''IdEx Bordeaux -- CPU (ANR-10-IDEX-03-02)''.

\end{document}